\theoremstyle{plain}
\newtheorem{theorem}{Theorem}
\newtheorem{definition}[theorem]{Definition}
\newtheorem{lemma}[theorem]{Lemma}
\theoremstyle{definition}
\newtheorem{remark}[theorem]{Remark}
\newtheorem{example}[theorem]{Example}
\newcommand{\cB}{\mathcal{B}}
\newcommand{\cD}{\mathcal{D}}
\newcommand{\cE}{\mathcal{E}}
\newcommand{\cH}{\mathcal{H}}
\newcommand{\cK}{\mathcal{K}}
\newcommand{\cL}{\mathcal{L}}
\newcommand{\cM}{\mathcal{M}}
\newcommand{\cN}{\mathcal{N}}
\newcommand{\cS}{\mathcal{S}}
\newcommand{\cV}{\mathcal{V}}
\DeclareMathOperator{\Tr}{Tr}
\DeclareMathOperator{\id}{id}
\newcommand{\ketbra}[2]{\ket{#1}\!\bra{#2}}
\newcommand{\Her}[1]{\cL_{\mathrm{H}}(#1)}
\newcommand{\Psd}[1]{\cL_{\mathrm{H}}^+(#1)}
\newcommand{\PD}[1]{\cD_{\mathrm{P}}(#1)}
\newcommand{\QD}[1]{\cD_{\mathrm{Q}}(#1)}
\newcommand{\ID}{\cD_{\mathrm{I}}}
\newcommand{\spn}[1]{\mathrm{Span}(#1)}
\begin{document}

\title{Quantum Implementation of Non-Positive-Operator-Valued Measurements in General Probabilistic Theories by Post-Selected POVMs}

\author{Hayato Arai}
\email{hayato.arai@riken.jp}
\affiliation{Mathematical Quantum Information RIKEN Hakubi Research Team, 
RIKEN Cluster for Pioneering Research (CPR) and RIKEN Center for Quantum Computing (RQC), 
Wako, Saitama 351-0198, Japan.
}

\author{Masahito Hayashi}
\email{hmasahito@cuhk.edu.cn}
\affiliation{School of Data Science, The Chinese University of Hong Kong, Shenzhen, Longgang District, Shenzhen, 518172, China}
\affiliation{International Quantum Academy, Shenzhen 518048, China}
\affiliation{Graduate School of Mathematics, Nagoya University, Furo-cho, Chikusa-ku, Nagoya, 464-8602, Japan}

\begin{abstract}
It is important problem to clarify the class of implementable quantum measurements from both fundamental and applicable viewpoints.
Positive-Operator-Valued Measure (POVM) measurements are implementable by the indirect measurement methods, and the class is the largest class determined by the mathematical structure of Hilbert space.
However, if we assume probabilistic consistency in our operations instead of the structure of Hilbert space, we can deal with Non-Positive-Operator-Valued Measure (N-POVM) measurements in the framework of General Probabilistic Theories (GPTs).
N-POVM measurements are not considered as implementable, but this paper gives a constructive way to implement N-POVM measurements by POVM measurements and post-selection in quantum theory when we restrict the domain of target states.
Besides, we show that a post-selected POVM measurement is regarded as an N-POVM measurement in a restricted domain.
These results provide a new relationship between N-POVM measurements in GPTs and post-selection.
\end{abstract}

\maketitle

\section{Introduction}
\label{section:introduction}

In quantum theory, studies of quantum measurements are important from both fundamental and applicable viewpoints \cite{Neumann1953,Kraus1969,Ozawa1984,AAV1988,DSS1989}. The most fundamental class of measurements is projective measurements, but as studies of quantum information theory have been flourishing, it has becomed a standard way to consider Positive-Operator-Valued Measure (POVM) measurements \cite{Peres2002}. POVM measurements are the mathematically most general class from the mathematical structure of Hilbert space, and a POVM measurement is implemented as the indirect measurement method by projective measurements \cite{Ozawa1984}.


On the other hand, if we assume probabilistic consistency in our operations instead of the structure of Hilbert space, we can deal with a more general mathematical model of measurements. The mathematically general models are known as General Probabilistic Theories (GPTs) \cite{Janotta2014,Plavala2021,KMI2009,Short2010,Barnum2012,Barrett2007,
Muller2013,KBBM2017,Matsumoto2018,Barnum2019,Takagi2019,ALP2019,
ALP2021,MAB2022,AH2024,CBSS,Arai2019,YAH2020,PR1994,Plavala2017,PNL2023}, which is a modern framework for the foundation of quantum theory. In the framework of GPTs, all measurements are available whenever the measurement outcome is obtained with positive probability. Due to this concept, Non-Positive-Operator-Valued Measure (N-POVM) measurements are available in GPTs if we restrict the class of available states.

It is known that N-POVM measurements have the potential to outperform some information tasks in quantum theory \cite{Arai2019,YAH2020,PR1994,Plavala2017,PNL2023}. For example, N-POVM measurements can discriminate non-orthogonal states perfectly \cite{Arai2019,YAH2020}, which violates the quantum Helstrom bound. Oppositely, when we impose the quantum Helstrom bound in a model where N-POVM measurements can be available, the model must be quantum theory \cite{AH2024}, and as a result, any N-POVM measurement is not available in the model. In this sense, N-POVM measurements are unavailable in quantum theory whenever we cannot outperform the quantum Helstrom bound.

However, even in quantum theory, we can outperform the quantum Helstrom bound if an information theoretical ``cheating " is allowed. A typical example of cheating is post-selection \cite{Aaronson2005,CF2015,Arvidsson2020,Regula2022,RLW2022}, which enables us to reject some outcomes of a measurement process.
It is well-known as ambiguous state discrimination that post-selected POVM measurements can improve the error probability of state discrimination and violate quantum Helstrom bound \cite{Chefles1998,HHH2008,SHHH2009,DFY2009,Jun2014}.
Therefore, it is still unclear whether N-POVM measurements in GPTs can be implemented in quantum theory when we allow post-selection.

Surprisingly, this paper shows the possibility of the implementation of N-POVM measurements with post-selected POVM measurements, and we give a constructive way to implement N-POVM measurements in quantum theory.
Our aim is to find a POVM measurement $\{M_i\}_{i\in I\cup \{i_0\}}$ for an N-POVM measurement $\{N_i\}_{i\in I}$ such that
\begin{align}\label{eq:def-IWP}
		\Tr \rho N_i = \cfrac{\Tr \rho M_i}{\sum_{j\in I}\Tr \rho M_j} \quad (\forall i\in I, \ \forall \rho).
\end{align}
However, while $\Tr (\cdot) N_i$ is always linear, the right-hand side $\cfrac{\Tr (\cdot) M_i}{\sum_{j\in I}\Tr (\cdot) M_j}$ is non-linear in general.
To solve this problem, we restrict the domain of states $\rho$ to a subset $\cD$ of density matrices, which corresponds to a prior information of target states.

As we see in Section~\ref{section:main},
we can always find a POVM measurement and a domain satisfying \eqref{eq:def-IWP} for any N-POVM measurement (Theorem~\ref{theorem:general}).
In other words, we can always find a domain where the right-hand side in \eqref{eq:def-IWP} is linear.
In our implementation, there are two types of costs.
The first cost is the dimension of domain, which corresponds to the amount of prior information.
The second is the accepting probability of post-selection.
This paper also gives an upper bound for these two types of costs (Theorem~\ref{theorem:dimension}).

In contrast to the implementation of N-POVM measurements with post-selection,
we also show that the opposite direction is possible, i.e., post-selected POVM can be implemented by N-POVM in a domain (Theorem~\ref{2theorem:post-selection}).
As a result, this paper gives a new relationship between N-POVM measurements in GPTs and post-selected POVM measurements.
By applying these results, we obtain the bound performance for some information tasks by N-POVM measurements and post-selected POVM measurements from one to another.

Our correspondence between post-selected POVM measurements and N-POVM measurements needs to restricted target states to an implementation domain.
In order to see that actual quantum information processes allow the restriction,
we consider a case of ambiguous state discrimination
in Section~\ref{section:ASD}.
When target states generated by a group action are perfectly distinguishable by a post-selected POVM measurement,
there exists an implementation domain including all target states and corresponding N-POVM measurement.

\section{Setting of GPTs and N-POVM measurements}
\label{section:setting}

As a preliminary, we briefly introduce the framework of General Probabilistic Theories (GPTs).
The framework of GPTs imposes a postulate that
any state $\rho$ in the state space $\cS$ and any measurement $\{M_i\}_{i\in I}$ in the measurement space $\cM$ must satisfy $\mathrm{Pr}(\rho,M_i)\ge0$, where $\mathrm{Pr}(\rho,M_i)$ denotes the probability to obtain the outcome $i$ by applying $\{M_i\}$ to $\rho$.
Besides, the standard framework of GPTs imposes that $\cS$ is compact convex set in a real vector space $\cV$.

Because our aim is an implementation in quantum theory,
this paper only considers the case when $\cV$ is given as the set of Hermitian matrices on finite-dimensional Hilbert space $\cH$,
and we denote it $\Her{\cH}$.
However, we remark that we can embed any model of GPTs with a real vector space $\cV$ into the vector space $\Her{\cH}$ whenever $\cV$ is finite-dimensional.

Quantum Theory corresponds to the case that $\cS$ is given as the set of positive semi-definite matrices $\Psd{\cH}$ with trace 1 and $\cN$ is given as the set of POVM measurements, which is the largest measurement space for the state space $\Psd{\cH}$.
However, when we restrict $\cS$ to a subset of $\Psd{\cH}$, we can choose a larger measurement space than POVM measurements.
We call such a measurement \textit{Mon-Positive-Operator-Valued (N-POVM) measurement} defined as follows.
\begin{definition}[N-POVM measurement]
	We say that a family of Hermitian matrices $\bm{N}:=\{N_i\}_{i\in I}$ is a N-POVM measurement
	if at least one $N_i$ is not positive semi-definite and $\sum_{i\in I} N_i=\mathbbm{1}$ holds, where $\mathbbm{1}$ is the identity matrix on $\cH$.
\end{definition}

Next, we define the positive domain of a measurement.

\begin{definition}[Positive Domain and Quantum Domain]
	For a given measurement $\bm{N}=\{N_i\}_{i\in I}$, we define the posiive domain of $\bm{N}$ $\PD{\bm{N}}$ as
	\begin{align}
		\PD{\bm{N}}:=\{\rho\in\Her{\cH}\mid \Tr \rho N_i\ge0 \ \forall i\in I\}.
	\end{align}
	Also, we define quantum domain $\QD{\bm{N}}$ as the intersection of $\PD{\bm{N}}$ and the set of density matrices, i.e.,
	\begin{align}
		\QD{\bm{N}}:=\{\rho\in\Psd{\cH}\mid \Tr \rho=1,\ \Tr \rho N_i\ge0 \ \forall i\in I\}.
	\end{align}
\end{definition}

If and only if $\PD{\bm{N}}\supset\cS$, the measurement $\bm{N}$ satisfies the postulate of GPTs even though $\bm{N}$ is a N-POVM measurement.
In other words, the measurement $\bm{N}$ is available in the state space $\cS$ of GPTs.
Besides, if a quantum state $\rho$ satisfies $\rho\in\QD{\bm{N}}$,
the measurement $\bm{N}$ is consistent to the postulate even for a state $\rho$.


\section{Implementation of N-POVM by Post-Selected POVM}
\label{section:main}

Now, we consider an implementation of N-POVM by post-selected POVM measurement defined as follows.

\begin{definition}[Implementation with Post-Selected POVM measurements]\label{def:IWP}
	Let $\bm{N}:=\{N_i\}_{i\in I}$ and $\bm{M}:=\{M_j\}_{j\in I\cup \{i_0\}}$ be an N-POVM measurement and a POVM measurement, respectively.
	Also, let $\ID\subset\QD{\bm{N}}$ be a subset of the quantum domain.
	We say that $\bm{N}$ is implemented by $\bm{M}$ with post-selection on the implementation domain $\ID$
	if the following relation holds for any $\rho\in\ID$ and $i\in I$:
	\begin{align}\label{eq:def-IWP}
		\Tr \rho N_i = \cfrac{\Tr \rho M_i}{\sum_{j\in I}\Tr \rho M_j}.
	\end{align}
\end{definition}

We illustrate the concept of our implementation as a protocol in quantum theory (Figure~\ref{figure:implementation}).

\begin{figure}[h]
	\centering
	\includegraphics[width=8cm]{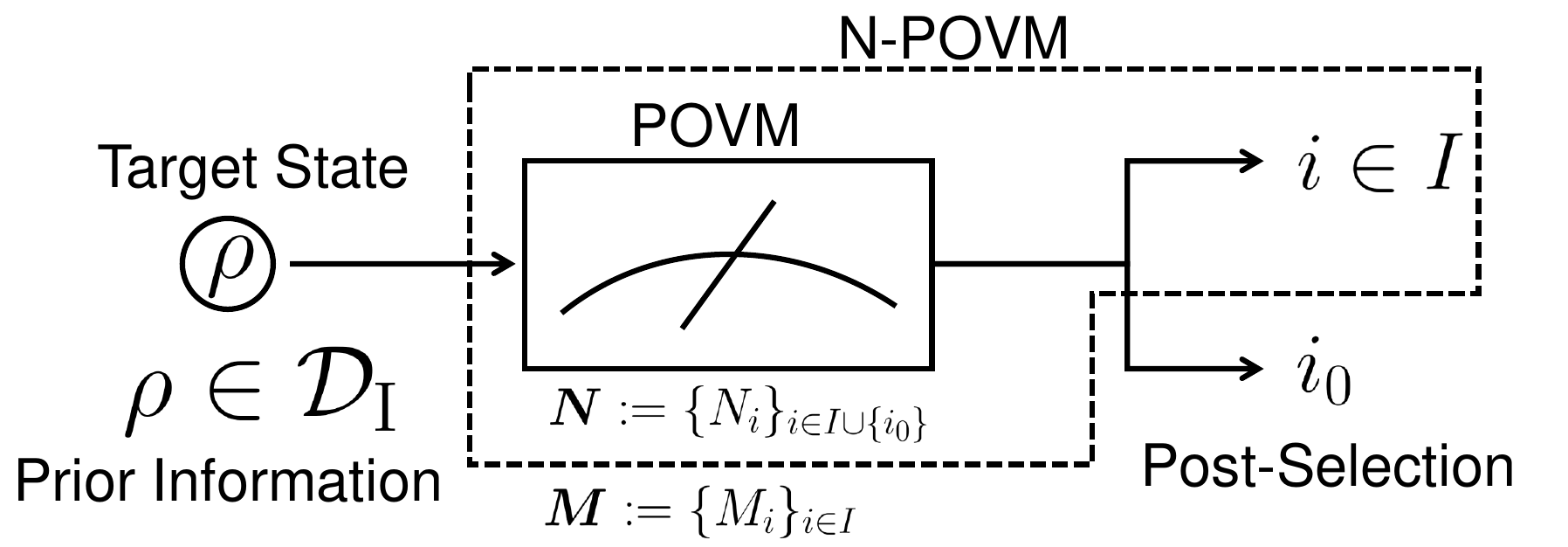}
	\caption{
	Implementation of N-POVM measurement with prior information and post-selection. When we assume prior information of a target state $\rho$, the process of the post-selected POVM measurement is regarded as an N-POVM measurement.
	}
	\label{figure:implementation}
\end{figure}

It is natural to assume convexity on the implementation domain $\ID$,
and therefore, here in after, we assume that $\ID$ is convex.
Then, we have the following lemma.
\begin{lemma}\label{lemma:domain}
Let $\bm{N}:=\{N_i\}_{i\in I}$ 
be an N-POVM measurement.
Assume that for any two elements $\rho_1,\rho_2 \in \ID$
there is an index $i\in I$ such that 
$\Tr N_i \rho_1\neq \Tr N_i \rho_2$. 
When a POVM $\bm{M}:=\{M_j\}_{j\in I\cup \{i_0\}}$ 
implements $\bm{N}$ with post-selection on the implementation domain $\ID$,
the value $\sum_{j\in I}\Tr \rho M_j$ does not depend on $\rho\in\ID$.
\end{lemma}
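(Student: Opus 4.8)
The plan is to linearize along segments of the convex domain $\ID$ and to exploit the fact that a product of two affine scalar functions is quadratic. First I would set $p(\rho):=\sum_{j\in I}\Tr\rho M_j$, which is exactly the acceptance probability of the post-selection. Since $\{M_j\}_{j\in I\cup\{i_0\}}$ is a POVM, one has $p(\rho)=\Tr\rho(\mathbbm{1}-M_{i_0})=1-\Tr\rho M_{i_0}$, so $p$ is affine in $\rho$ and is strictly positive on $\ID$ (otherwise the right-hand side of \eqref{eq:def-IWP} would be undefined). Clearing the denominator, the implementation hypothesis becomes the product identity
\[
	\Tr\rho N_i \cdot p(\rho)=\Tr\rho M_i \qquad(\forall i\in I,\ \forall\rho\in\ID).
\]

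Next I would fix two distinct points $\rho_1,\rho_2\in\ID$ and, using convexity of $\ID$, restrict attention to the segment $\rho_t:=(1-t)\rho_1+t\rho_2\in\ID$ for $t\in[0,1]$. Along this segment each of $f_i(t):=\Tr\rho_t N_i$, $g(t):=p(\rho_t)$, and $h_i(t):=\Tr\rho_t M_i$ is affine in $t$, and the identity above reads $f_i(t)g(t)=h_i(t)$ for every $t\in[0,1]$. The left-hand side is a polynomial in $t$ of degree at most two and the right-hand side a polynomial of degree at most one; since they agree at the infinitely many points of $[0,1]$, they coincide as polynomials. Comparing the coefficients of $t^2$ forces the leading coefficient of the left-hand side to vanish, i.e.
\[
	\bigl(\Tr\rho_2 N_i-\Tr\rho_1 N_i\bigr)\bigl(p(\rho_2)-p(\rho_1)\bigr)=0 \qquad(\forall i\in I).
\]

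Finally I would invoke the separation hypothesis: because $\rho_1\neq\rho_2$, there is an index $i\in I$ with $\Tr N_i\rho_1\neq\Tr N_i\rho_2$, so the first factor is nonzero for that $i$, whence $p(\rho_1)=p(\rho_2)$. As $\rho_1,\rho_2$ were arbitrary distinct elements of $\ID$, the acceptance probability $p$ is constant on $\ID$, which is the claim. The only genuinely delicate point will be the passage from the quotient form of \eqref{eq:def-IWP} to the product identity, together with the observation that a product of two affine scalar functions is quadratic so that its leading coefficient must cancel; the separation assumption is precisely the ingredient that rules out the degenerate branch in which $\bm{N}$ fails to distinguish $\rho_1$ from $\rho_2$, and without it $p$ could in principle vary along directions invisible to $\bm{N}$.
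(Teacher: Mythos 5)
Your proof is correct and follows essentially the same route as the paper's: both restrict to a convex combination of $\rho_1,\rho_2\in\ID$, exploit linearity of $\rho\mapsto\Tr\rho N_i$ against the nonlinearity of the post-selected quotient, reduce to the vanishing of the cross term $\bigl(\Tr\rho_2 N_i-\Tr\rho_1 N_i\bigr)\bigl(p(\rho_2)-p(\rho_1)\bigr)$, and then invoke the separation hypothesis. The only difference is presentational: you clear denominators and compare polynomial degrees along the segment, while the paper manipulates the fractional identity directly, but the underlying algebra is identical.
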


The proof of Lemma~\ref{lemma:domain} is written in Appendix~\ref{append:lem}.
This lemma guarantees that  
the value $1/\left(\sum_{j\in I}\Tr \rho M_j\right)$ does not depend on 
the choice of $\rho \in \ID$.
Hence, this value is denoted by $\mathrm{Acc}(\bm{M})$.


Next, we consider costs of the implementation.
There exist two types of costs in the above implementation of N-POVM.
The first cost is the value $\mathrm{Acc}(\bm{M})$,
which corresponds to the accepting probability of post selection.
The second cost is the amount of prior information,
which corresponds to the dimension of the implementation domain $\ID$.
Here, 
we denote the value $\dim(\bm{X})$ for a family $\bm{X}$ or a set $\bm{X}$
as the dimension of the linear subspace $\spn{\bm{X}}$, where $\spn{\bm{X}}$ denotes the linear subspace spanned by $\bm{X}$.
Then, we estimate the cost of prior information as $\dim(\ID)$.
We need to know that a target state belongs to $\ID$,
and a large implementation domain $\ID$ corresponds to a small amount of the prior information, and a small implementation domain corresponds to a large amount of the prior information.

We give a constructive way to implement N-POVM as the following theorem.

\begin{theorem}\label{theorem:general}
	Let $\bm{N}:=\{N_i\}_{i\in I}$ be an N-POVM measurement written as
	\begin{align}\label{eq:representation}
		N_i=\sum_{k=1}^n f_i^{(k)}(S_i^{(k)}),
	\end{align}
	where $f_i^{(k)}$ and $S_i^{(k)}$ are a linear map on $\Her{\cH}$ and a positive semi-definite matrix on $\cH$ for any $i,k$, respectively.
	Define a set $\cD(\{f_i^{(k)}\}_{i,k})$ as
	\begin{align}\label{def:Im-domain}
	\begin{aligned}
		\cD(\{f_i^{(k)}\}_{i,k}):=&\{\rho\in\Psd{\cH}\mid \Tr \rho=1,\\
		&f_i^{(k)\dag}(\rho)=\rho \ ( i\in I, \ 1\le k\le n)\}.
	\end{aligned}
	\end{align}
	Then, $\bm{N}$ is implemented by the following POVM $\bm{M}(\{f_i^{(k)},S_i^{(k)}\}):=\{M_j\}_{j\in I\cup\{i_0\}}$ with post-selection on the implementation domain $\ID=\cD(\{f_i^{(k)}\}_{i,k})$:
	\begin{align}
		M_j=
		\begin{cases}
			\displaystyle \frac{1}{c({\{S_i^{(k)}\}})}\sum_{k=1}^n S_i^{(k)} & (j=i\in I),\\
			\displaystyle \mathbbm{1}-\frac{1}{c({\{S_i^{(k)}\}})}\sum_{i\in I}\sum_{k=1}^n S_i^{(k)} & (j=i_0),
		\end{cases}\label{def:Im-POVM}
	\end{align}
	where $c({\{S_i^{(k)}\}})$ is the maximum eigenvalue of the Hermitian matrix $\displaystyle \sum_{i\in I}\sum_{k=1}^n S_i^{(k)}$.
\end{theorem}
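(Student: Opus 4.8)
The plan is to verify two things and one inclusion: first that $\bm{M}(\{f_i^{(k)},S_i^{(k)}\})$ is a genuine POVM, second that the ratio in \eqref{eq:def-IWP} collapses to $\Tr\rho N_i$ for every $\rho$ in $\cD(\{f_i^{(k)}\}_{i,k})$, and along the way that $\cD(\{f_i^{(k)}\}_{i,k})\subset\QD{\bm{N}}$ so that Definition~\ref{def:IWP} actually applies.

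For the POVM property I would argue as follows. Positivity of each $M_i$ with $i\in I$ is immediate, since it is a nonnegative multiple of the positive semi-definite matrix $\sum_{k}S_i^{(k)}$. For $M_{i_0}$, positivity is equivalent to $\sum_{i,k}S_i^{(k)}\preceq c(\{S_i^{(k)}\})\mathbbm{1}$, which holds precisely because $c(\{S_i^{(k)}\})$ is the largest eigenvalue of the positive semi-definite matrix $\sum_{i,k}S_i^{(k)}$. Completeness $\sum_j M_j=\mathbbm{1}$ then follows by direct cancellation of the two sums. I would also record that $c(\{S_i^{(k)}\})>0$: otherwise every $S_i^{(k)}=0$, forcing each $N_i=0$ and contradicting $\sum_i N_i=\mathbbm{1}$; this positivity is what makes the denominator below well defined.

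The core step is to linearize the numerator using the adjoint identity $\Tr\big(\rho\,f_i^{(k)}(S_i^{(k)})\big)=\Tr\big(f_i^{(k)\dag}(\rho)\,S_i^{(k)}\big)$ together with the defining fixed-point constraint $f_i^{(k)\dag}(\rho)=\rho$ of the domain. Combining these with the representation \eqref{eq:representation} gives, for any $\rho\in\cD(\{f_i^{(k)}\}_{i,k})$,
\[
\Tr\rho N_i=\sum_{k=1}^n\Tr\big(f_i^{(k)\dag}(\rho)S_i^{(k)}\big)=\Tr\Big(\rho\sum_{k=1}^n S_i^{(k)}\Big)=c(\{S_i^{(k)}\})\,\Tr\rho M_i.
\]
In particular $\Tr\rho N_i\ge0$, which establishes $\cD(\{f_i^{(k)}\}_{i,k})\subset\QD{\bm{N}}$. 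Summing this identity over $i\in I$ and using $\sum_i N_i=\mathbbm{1}$ together with $\Tr\rho=1$ yields $\sum_{i\in I}\Tr\rho M_i=1/c(\{S_i^{(k)}\})$, and dividing the displayed identity by this constant recovers exactly \eqref{eq:def-IWP}.

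I do not expect a genuine obstacle here, since the entire content sits in the invariance condition defining the domain: it is the fixed-point constraint $f_i^{(k)\dag}(\rho)=\rho$, read through the adjoint relation, that turns the intrinsically nonlinear post-selection ratio into the linear functional $\Tr\rho N_i$. The remaining points—POVM positivity, completeness, and the fact that the denominator is a $\rho$-independent constant (the latter being the content of Lemma~\ref{lemma:domain})—are routine; the only thing genuinely requiring care is confirming $c(\{S_i^{(k)}\})>0$ so that the ratio is meaningful.
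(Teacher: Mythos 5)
Your proposal is correct and follows essentially the same route as the paper's own proof: positivity of $M_{i_0}$ via the maximum-eigenvalue normalization, completeness by cancellation, and the key linearization $\Tr\rho\, f_i^{(k)}(S_i^{(k)})=\Tr f_i^{(k)\dag}(\rho)\,S_i^{(k)}=\Tr\rho\, S_i^{(k)}$ from the fixed-point constraint, followed by summing over $i$ with $\sum_i N_i=\mathbbm{1}$ and $\Tr\rho=1$ to pin the denominator at $1/c(\{S_i^{(k)}\})$. Your two small additions—deducing $c(\{S_i^{(k)}\})>0$ from $\sum_i N_i=\mathbbm{1}$ (the paper infers it from positive semi-definiteness alone, which only gives $c\ge 0$) and explicitly checking $\cD(\{f_i^{(k)}\}_{i,k})\subset\QD{\bm{N}}$ (which the paper asserts outside the proof)—are welcome refinements, not a different argument.
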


The proof of Theorem~\ref{theorem:general} is written in Appendix~\ref{append-1}.
Because any state $\rho\in\cD(\{f_i^{(k)}\}_{i,k})$ satisfies $\sum_{i\in I} \Tr M_i\rho=c(\{S_i^{(k)}\})$, the value $1/c({\{S_i^{(k)}\}})$ equal to the value $\mathrm{Acc}(\bm{M}(\{f_i^{(k)},S_i^{(k)}\}))$.
Also, because of the choice of the $\ID$, the relation $\ID\subset\QD{\bm{N}}$ holds.
Theorem~\ref{theorem:general} states that the right-hand side in \eqref{eq:def-IWP} is linear on $\ID$.

\begin{remark}
	Theorem~\ref{theorem:general} gives a correspondence between two measurements in two models in GPTs through post-selection.
	Actually, Theorem~\ref{theorem:general} is extended to a more general statement for general two classes of measurements.
	We state the extended theorem in Appendix~\ref{append-extend}.
\end{remark}

Theorem~\ref{theorem:general} gives an implementation of N-POVMs with post-selection.
However,
Theorem~\ref{theorem:general} does not state the two costs of implementation.
Especially, Theorem~\ref{theorem:general} does not guarantee that the domain $\ID$ is not the empty set.
Our aim is to find an implementation with large accepting probability and large domain, which depends on the form in \eqref{eq:representation}.
Nevertheless, the definition of N-POVM is quite simple, and therefore, it is difficult to give a convenient sufficient condition for small costs.
Therefore, we give a sufficient condition for small costs in concrete situations as the next two theorems.

The first theorem gives a sufficient condition for a large accepting probability.
\begin{theorem}\label{theorem:ppt}
	Let $d$ be the dimension of the Hilbert space $\cH$.
	Also, let $\bm{N}=\{N_i\}_{i\in I}$ be an N-POVM given in Theorem~\ref{theorem:general} such that $f_i^{(k)}$ is trace preserving for any $i,k$.
	Then, the POVM $\bm{M}$ given by Theorem~\ref{theorem:general} satisfies
	\begin{align}
		\mathrm{Acc}(\bm{M})\ge 1/d.
	\end{align}
\end{theorem}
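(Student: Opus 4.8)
The plan is to recall from the discussion following Theorem~\ref{theorem:general} that $\mathrm{Acc}(\bm{M}) = 1/c(\{S_i^{(k)}\})$, where $c(\{S_i^{(k)}\})$ is by definition the maximum eigenvalue of the Hermitian matrix $T := \sum_{i\in I}\sum_{k=1}^n S_i^{(k)}$. Hence the claimed bound $\mathrm{Acc}(\bm{M}) \ge 1/d$ is equivalent to the eigenvalue estimate $c(\{S_i^{(k)}\}) \le d$, and this is what I would establish.

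The first step is to evaluate the trace of $T$. Since each $f_i^{(k)}$ is trace preserving, we have $\Tr S_i^{(k)} = \Tr f_i^{(k)}(S_i^{(k)})$; summing the representation \eqref{eq:representation} over $i\in I$ and using the N-POVM normalization $\sum_{i\in I} N_i = \mathbbm{1}$ then gives
\begin{align}
\Tr T &= \sum_{i\in I}\sum_{k=1}^n \Tr S_i^{(k)} = \sum_{i\in I}\sum_{k=1}^n \Tr f_i^{(k)}(S_i^{(k)}) \nonumber\\
&= \Tr\Big(\sum_{i\in I} N_i\Big) = \Tr \mathbbm{1} = d.
\end{align}

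The second step is an elementary eigenvalue bound. Each $S_i^{(k)}$ is positive semi-definite, so their sum $T$ is positive semi-definite as well; therefore all eigenvalues of $T$ are non-negative, and its maximum eigenvalue is bounded by the sum of all of them, namely $c(\{S_i^{(k)}\}) = \emax(T) \le \Tr T = d$. Combined with $\mathrm{Acc}(\bm{M}) = 1/c(\{S_i^{(k)}\})$, this yields $\mathrm{Acc}(\bm{M}) \ge 1/d$.

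I expect no serious obstacle here: the statement reduces entirely to the trace computation above together with the standard fact that a positive semi-definite matrix has maximum eigenvalue at most its trace. The only point needing care is the bookkeeping that links trace-preservation of the maps $f_i^{(k)}$ to the resolution of identity $\sum_{i\in I} N_i = \mathbbm{1}$. As a by-product, $\Tr T = d > 0$ forces $c(\{S_i^{(k)}\}) > 0$, so $\mathrm{Acc}(\bm{M})$ is indeed well defined.
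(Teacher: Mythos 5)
Your proof is correct and follows essentially the same route as the paper's: compute $\Tr\sum_{i,k} S_i^{(k)} = \Tr\sum_i N_i = d$ via trace-preservation of the $f_i^{(k)}$, then bound the maximum eigenvalue of the positive semi-definite sum by its trace. The only difference is cosmetic — you write the argument for general $n$ and general trace-preserving maps, while the paper's appendix uses the notation of the partial-transposition case ($S_i^{(1)}+\Gamma(S_i^{(2)})$) — so no further changes are needed.
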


Theorem~\ref{theorem:ppt} is an easy consequence of trace preservation of $f_i^{(k)}$,
but
the proof of Theorem~\ref{theorem:ppt} is written in Appendix~\ref{append-3}.
As a typical example,
an N-POVM defined by partial transposition on bipartite Hilbert space satisfies the assumption of Theorem~\ref{theorem:ppt}.
We see a detail of this example in the next section.

Next,
to state the second theorem,
which gives a sufficient condition for a large domain,
we define $\dim'(\bm{N})$ as follows.
Let us consider a Hilbert space $\cH$ with dimension $d$, and we complete orthogonal basis $\{B_i\}_{i=1}^{d^2}$ of $\Her{\cH}$ with $\dim(\cH)=d$ such that $\{B_i\}_{i=1}^d$ composes orthogonal projections on $\cH$.
Then, $\dim'(\bm{N})$ is defined as the smallest number of elements of such basis $\{B_i\}_{i\in I}$ composing all the element of $\bm{N}$.
Also, we define an $\epsilon$-ball of $x\in\Her{\cH}$ as $\cB_\epsilon(x):=\{y\in\Her{\cH}\mid \|x-y\|_2\le \epsilon\}$.


\begin{theorem}\label{theorem:dimension}
	Let $\cH$ be a Hilbert space $\cH$ with dimension $d$.
	Also, let $\bm{N}=\{N_i\}_{i\in I}$ be an N-POVM measurement whose quantum domain $\QD{\bm{N}}$ satisfies
	\begin{itemize}
		\item[C1] $\QD{\bm{N}}$ contains $d$-number of orthogonal pure states $\rho_j$.
		\item[C2] $\QD{\bm{N}}$ contains $\cB_\epsilon(\frac{\mathbbm{1}}{d})$ for some $\epsilon>0$.
	\end{itemize}
Then, there exists a pair of a POVM $\bm{M}=\{M_i\}_{i\in I\cup\{i_0\}}$ 
and an implementation domain $\ID$ that satisfies the following conditions;
(i) The pair implements $\bm{N}$ with post-selection.
(ii) The relations 
	\begin{gather}
		\dim(\ID)\ge d^2+d-2\dim'(\{N_i\}_{i\in I'}),\label{eq:dim-2}\\
		\mathrm{Acc}(\bm{M})\ge 1/d\label{eq:dim-3}
	\end{gather}
hold.
\end{theorem}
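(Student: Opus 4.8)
The plan is to reduce the statement to a purely linear-algebraic existence problem and then solve it by a cancellation construction driven by C1 and C2. First I would look for $\bm M$ together with a convex $\ID$ and set $\cU:=\spn{\ID}$. Rewriting the target identity \eqref{eq:def-IWP} as $\Tr\rho M_i=\big(\sum_{j\in I}\Tr\rho M_j\big)\Tr\rho N_i$ and summing over $i$, one sees that a natural sufficient form is $M_i=aN_i+L_i$ for a common scalar $a$ and Hermitian matrices $L_i\in\cU^\perp$. Indeed, if $\mathbbm{1}\in\cU$ and $L_i\in\cU^\perp$, then for every $\rho\in\cU$ we have $\Tr\rho L_i=0$ and $\sum_{i\in I}\Tr\rho M_i=a\Tr\rho\mathbbm{1}+\Tr\big(\rho\sum_i L_i\big)=a$, so \eqref{eq:def-IWP} holds automatically with $\mathrm{Acc}(\bm M)=a$. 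Thus it suffices to exhibit a subspace $\cU\ni\mathbbm{1}$ with $\dim\cU\ge d^2+d-2\dim'(\bm N)$, a scalar $a\ge1/d$, and Hermitian $L_i\in\cU^\perp$ making $M_i:=aN_i+L_i\ge0$ and $\sum_{i\in I}M_i\le\mathbbm{1}$, together with a convex $\ID\subseteq\QD{\bm N}$ whose span is $\cU$.

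Next I would use C1 to pin down a good basis. The $d$ orthogonal pure states are rank-one projectors $B_1,\dots,B_d$ with $\sum_l B_l=\mathbbm{1}$, which I complete to an orthonormal Hermitian basis realizing $\dim'(\bm N)$. Since each $B_l\in\QD{\bm N}$, we get $\Tr(B_l N_i)=(N_i)_{ll}\ge0$, so every $N_i$ has nonnegative diagonal in this basis and all of its negativity is carried by off-diagonal support directions; because $\mathbbm{1}\in\spn{\{N_i\}}$, the $d$ diagonal projectors are forced into the support, and I would bound the number of off-diagonal support directions by $2\dim'(\bm N)-d$. Taking $\cU^\perp$ to be the span of these off-diagonal directions (a traceless subspace, so $\mathbbm{1}\in\cU$) and setting $L_i:=-a\,\Pi_{\cU^\perp}(N_i)$, minus $a$ times the off-diagonal part of $N_i$, gives $M_i=a\,\mathrm{diag}(N_i)\ge0$ and $\sum_i M_i=a\,\mathrm{diag}(\mathbbm{1})=a\mathbbm{1}$, whence $M_{i_0}=(1-a)\mathbbm{1}\ge0$ for any $a\in(0,1]$; choosing $a\ge1/d$ yields \eqref{eq:dim-3}.

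Then C2 supplies the span and the containment. By construction $\dim\cU=d^2-\dim\cU^\perp\ge d^2+d-2\dim'(\bm N)$, which is \eqref{eq:dim-2}. To realize $\cU$ as $\spn{\ID}$ with $\ID\subseteq\QD{\bm N}$, I would use that $\mathbbm{1}/d\in\cU$ together with $\cB_\epsilon(\mathbbm{1}/d)\cap\Psd{\cH}\subseteq\QD{\bm N}$: the convex set $\ID:=\{\rho\in\cU\mid\rho\ge0,\ \Tr\rho=1,\ \|\rho-\mathbbm{1}/d\|_2\le\epsilon\}$ then lies in $\QD{\bm N}$ and, being a relative neighbourhood of the interior point $\mathbbm{1}/d$ in the trace-one slice of $\cU$, spans $\cU$, giving $\dim\ID=\dim\cU$ and completing (i)–(ii).

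The hard part will be the dimension bookkeeping of the second step. The construction is forced to work in the C1-adapted basis, since it is only there that every $N_i$ has nonnegative diagonal, which is exactly what lets pure cancellation of the off-diagonal part keep $M_i\ge0$; yet $\dim'(\bm N)$ is defined through the support-minimizing adapted basis. Establishing that the off-diagonal support in the C1 basis is at most $2\dim'(\bm N)-d$, and verifying that $\cU^\perp$ can be kept traceless so that $\mathbbm{1}\in\cU$ (needed both for the constancy of $\sum_i\Tr\rho M_i$ and for $\ID$ to contain states), is where the factor two and the $-d$ enter. A tempting alternative, reflecting the negative eigendirections of each $N_i$, fixes the negativity with fewer constrained directions but necessarily places trace-carrying operators in $\cU^\perp$, breaking $\mathbbm{1}\in\cU$ and destroying the uniform bound on $\mathrm{Acc}$; the real content is that localizing all negativity into the off-diagonal support via C1 and controlling its extent by $\dim'(\bm N)$, while C2 secures both the span of $\ID$ and its containment in $\QD{\bm N}$, lets one meet the dimension bound and $\mathrm{Acc}\ge1/d$ simultaneously.
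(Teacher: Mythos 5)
Your construction is correct, but it is a genuinely different route from the paper's. The paper proves this theorem by forcing $\bm{N}$ into the mold of Theorem~\ref{theorem:general}: in the basis determined by the C1 states it writes $N_i=f_i(S_i)$ with $S_i:=\hat N_i+\overline N_i\ge0$, where $\hat N_i,\tilde N_i$ are the diagonal and off-diagonal parts of $N_i$ and $\overline N_i$ is an auxiliary traceless matrix, and it chooses linear maps $f_i$ that fix the diagonal directions and send $\overline N_i\mapsto\tilde N_i$; the POVM is then $M_i=S_i/c$, the implementation domain is the common fixed-point set of the maps $f_i^\dag$, the bound $\mathrm{Acc}(\bm M)\ge1/d$ comes from Theorem~\ref{theorem:ppt} (the $f_i$ are trace preserving), and the factor $2$ in \eqref{eq:dim-2} arises precisely because that domain must be orthogonal to both families $\{\tilde N_i\}$ and $\{\overline N_i\}$. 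You bypass all of this machinery: taking $M_i:=a\hat N_i$, $M_{i_0}:=(1-a)\mathbbm{1}$ and $\ID:=$ the states in $\cU:=\spn{\{\tilde N_i\}_{i\in I}}^\perp$ near $\mathbbm{1}/d$, the identity $\sum_{i\in I}\hat N_i=\mathbbm{1}$ makes $\sum_{i\in I}\Tr\rho M_i=a$ for \emph{every} state, and \eqref{eq:def-IWP} is verified in one line. This buys you three things: (a) no auxiliary $\overline N_i$ and no linear maps $f_i$ (whose well-definedness the paper has to worry about), (b) a larger domain, since you need orthogonality only to $\{\tilde N_i\}$, giving $\dim(\ID)\ge d^2-\dim'(\bm N)$, which is stronger than the stated $d^2+d-2\dim'(\bm N)$, and (c) an acceptance probability that is exactly the constant $a$, which you may even push to $a=1$ (then $M_{i_0}=0$), far above the $1/d$ demanded by \eqref{eq:dim-3}. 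What the paper's route buys instead is that the theorem becomes an instance of the general construction of Theorem~\ref{theorem:general}, which is the form that generalizes to arbitrary GPT models in its appendix.

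One remark on the step you flag as ``the hard part'': it is much easier than you fear, and no careful transfer between the C1 basis and the $\dim'$-optimal basis is needed. You only require the chain
\begin{align}
\dim\spn{\{\tilde N_i\}_{i\in I}}\le\dim\spn{\{N_i\}_{i\in I}}\le\dim'(\bm N)\le 2\dim'(\bm N)-d.
\end{align}
The first inequality holds because $\tilde N_i$ is the image of $N_i$ under the orthogonal projection annihilating diagonal matrices, so $\spn{\{\tilde N_i\}}$ is a linear image of $\spn{\{N_i\}}$; the second is the definition of $\dim'$ (all $N_i$ lie in the span of the selected basis elements, in whichever adapted basis attains the minimum); and the third is exactly $\dim'(\bm N)\ge d$, which follows from your own observation: since $\mathbbm{1}=\sum_{i\in I}N_i$ lies in the span of the selected elements, and since in any adapted basis the $d$ mutually orthogonal projections are rank one, sum to $\mathbbm{1}$, and carry all the nonzero coefficients of the unique orthogonal expansion of $\mathbbm{1}$, all $d$ of them must be selected. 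With this two-line argument inserted, your proposal is complete.
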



The proof of Theorem~\ref{theorem:dimension} is written in Appendix~\ref{append-2}.
Due to Theorem~\ref{theorem:dimension},
we can choose a large implementation domain if the number of outcome is small.
In other words, the number of outcome gives a sufficient condition for a implementation of N-POVM with a small amount of cost about prior information.
As a typical example,
an N-POVM whose elements are positive for all separable states in bipartite Hilbert space satisfies the assumption of Theorem~\ref{theorem:ppt}.
Actually, positivity for separability ensures the condition C1 in Theorem~\ref{theorem:dimension},
and so-called separable ball theorem \cite[Corollary 4]{GH2002} ensures condition C2.
Such an N-POVM is well-known as an available measurement in quantum composite system in GPTs \cite{Janotta2014,Plavala2021,ALP2019,
ALP2021,MAB2022,AH2024,Arai2019,YAH2020}.
Especially, an N-POVM defined by partial transposition satisfies the assumption, similarly to the case of Theorem~\ref{theorem:ppt}.
We see a detail of this example in the next section.

\section{N-POVM Defined by Partial Transposition}
\label{section:PT}

In this section,
we consider a more concrete case as an example,
and we see that our implementation of N-POVMs improve state discrimination with POVMs.

Let us consider a bipartite Hilbert space $\cH=\cH_A\otimes\cH_B$,
and let $\Gamma$ be the partial transposition map transposing on the system $B$.
We consider an N-POVM measurement $\bm{N}=\{N_i\}_{i\in I}$ satisfying
\begin{align}\label{eq:ppt}
	N_i=S_i^{(1)}+\Gamma(S_i^{(2)}),
\end{align}
where $S_k^{(i)}$ is a positive semi-definite matrix.
In other words, we consider the case when $f_i^(1)=\id$ and $f_i^(2)=\Gamma$.
In this case, Theorem~\ref{theorem:general} gives a domain $\cD(\Gamma)=\{\rho\in\Psd{\cH}\mid \Tr \rho=1, \     \Gamma(\rho)=\rho\}$.
The domain $\cD(\Gamma)$ contains the set of all separable states $\sum_i p_i \rho_i^A\otimes\rho_i^B$ such that all entries of $\rho_i^B$ are real numbers for any $i$.
Therefore, $d(\cD(\Gamma))$ is larger than $\frac{1}{2}\dim(\cH_A)^2\left(\dim(\cH_B)^2+\dim(\cH_B)\right)$,
which is especially larger than the half of the full-dimension $d^2=\dim(\cH_A)^2\dim(\cH_B)^2$.
Also, Theorem~\ref{theorem:ppt} ensures the accepting probability is larger than $1/d$.


An N-POVM given by \eqref{eq:ppt} is available in any model of quantum composite system in GPTs \cite{Janotta2014}.
Especially, in the case $(\dim(\cH_A),\dim(\cH_B))=(2,2),(2,3),(3,2)$, all available measurement consistent with fundamental postulate are written in the form \eqref{eq:ppt}.

Moreover,
the reference \cite{Arai2019} shows that such N-POVM can improve the performance for state discrimination.
We see this in the following concrete example in the case of $(\dim(\cH_A),\dim(\cH_B))=(2,2)$.

\begin{example}
	Let $\bm{N}=\{N_0,N_1\}$ be a family of matrices defined as
	\begin{align}
		N_0:=&
		\begin{bmatrix}
			1&0&0&0\\
			0&0&1&0\\
			0&1&0&0\\
			0&0&0&1
		\end{bmatrix}, \quad
		N_1:=
		\begin{bmatrix}
			0&0&0&0\\
			0&1&-1&0\\
			0&-1&1&0\\
			0&0&0&0
		\end{bmatrix}\\
		=&\Gamma\left(
		\begin{bmatrix}
			1&0&0&1\\
			0&0&0&0\\
			0&0&0&0\\
			1&0&0&1
		\end{bmatrix}
		\right).
	\end{align}
	The family $\bm{N}$ is composed of PPT matrices, and therefore, $\bm{N}$ is a measurement in $\mathrm{SEP}(A;B)$.
	
	Also, we consider another measurement $\bm{M}:=\{M_0,M_1,M_2\}$, where $M_0:=\frac{1}{2}\Gamma(N_0)$, $M_1:=\frac{1}{2}N_1$, and $M_2:=\mathbbm{1}-M_0-M_1$.
	$M_2$ is written as
	\begin{align}
		M_2=\frac{1}{2}
		\begin{bmatrix}
			1&0&0&-1\\
			0&1&1&0\\
			0&1&1&0\\
			-1&0&0&1
		\end{bmatrix},
	\end{align}
	and therefore, $M_2$ is positive semi-definite.
	Hence, the measurement $\bm{M}$ is a POVM.
	
	Then, we show that the measurement $\bm{N}$ is implemented by $\bm{M}$ with post-selection on the domain $\cD$ as follows.
	For any state $\rho\in\cD$,
	the following equation holds:
	\begin{align}\label{eq:ex1-1}
		&\sum_{j=0,1}\Tr \rho M_j
		=1-\Tr \rho M_2\nonumber\\
		=&1-\Tr \rho \frac{1}{2}(M_2+\Gamma(M_2))
		=1-\Tr \rho \frac{1}{2}\mathbbm{1}=\frac{1}{2}.
	\end{align}
	By applying the equation \eqref{eq:ex1-1},
	we obtain the following equation:
	\begin{align}
		&\cfrac{\Tr \rho M_0}{\sum_{j\in I}\Tr \rho M_j}
		=2\Tr \rho M_0\nonumber\\
		=&\Tr \rho\Gamma(N_0)
		=\Tr \Gamma(\rho) N_0
		=\Tr \rho N_0.
	\end{align}
	Also, we obtain the following equation:
		\begin{align}
		&\cfrac{\Tr \rho M_1}{\sum_{j\in I}\Tr \rho M_j}
		=2\Tr \rho M_1
		=\Tr \rho N_1.
	\end{align}
	As a result,
	the measurement $\bm{N}$ is implemented by $\bm{M}$ with post-selection on the domain $\cD(\Gamma)$.
	Here, we remark that the accepting probability is $1/2$, which is larger than $1/d=1/4$.
	
	Finally, the measurement $\bm{N}$ can discriminate the following two non-orthogonal states perfectly, i.e., $\Tr \rho_i N_j=\delta_{ij}$ holds:
	\begin{align}
		\rho_0:=
		\begin{bmatrix}
			1&0&0&0\\
			0&0&0&0\\
			0&0&0&0\\
			0&0&0&0
		\end{bmatrix},
		\rho_1:=\frac{1}{2}
		\begin{bmatrix}
			1&1&1&1\\
			1&1&1&1\\
			1&1&1&1\\
			1&1&1&1
		\end{bmatrix}.
	\end{align}
	Here, we remark that $\rho_i$ satisfies $\rho_i\in \cD(\Gamma)$.
	As a result,
	the N-POVM discriminates two non-orthogonal states perfectly, which is impossible by POVM.
	This fact corresponds to the fact that
	$\rho_0$ and $\rho_1$ are perfectly distinguishable with post-selection by the POVM $\bm{M}$.
\end{example}

\section{Implementation of Post-Selected POVM by N-POVM}
\label{section:PS}

Next, we investigate the opposite direction, i.e., an implementation of post-selected POVM measurements by N-POVM measurements.
As we mentioned in Section~\ref{section:introduction}, a post-selected measurement is a non-linear function but a N-POVM measurement is linear, and therefore, we also need to restrict the domain where the post-selected measurement behaves linear.

Let us consider a POVM measurement $\bm{M}=\{M_i\}_{i\in I\cup \{i_0\}}$ and the situation when we reject the outcome $i_0$.
Our aim is to find a domain where the following function is linear:
\begin{align}
	\frac{\Tr M_i\rho}{\sum_{j\in I}\Tr M_j\rho} \quad \rho\in\cD.
\end{align}
In other words, we need to choose $\cD$ as $\sum_{j\in I}\Tr M_j\rho$ is constant. Because $\sum_{j\in I}\Tr M_j(\cdot)$ is linear, we always choose $\cD$ as non-empty.
For example, if $\sum_{j\in I}\Tr M_j$ have a degenerate eigenvalues, we can choose $\cD$ as the set of states corresponds to the eigenspace of the degenerate eigenvalues.
By restricting the domain $\cD$, we can choose an N-POVM measurement associated with a given post-selected POVM measurement as follows.

\begin{theorem}\label{2theorem:post-selection}
Let $\bm{M}=\{M_i\}_{i\in I\cup \{i_0\}}$ be a POVM measurement.
We consider a linear subspace $\cK\subset \Her{\cH}$ and the intersection $\ID:=\cK\cap\Psd{\cH}$.
We consider Hilbert Schmidt inner product,
and denote the projection to ${\cal K}$ by ${\cal P}$.
Also, assume that there exists a constant $c_0>0$ such that 
any $\rho\in \ID$ satisfies 
\begin{align}
\Tr \rho M_{i_0}
=\frac{1}{c_0}
\label{MNB4}.
\end{align}
Then, the following conditions are equivalent.
\begin{description}
\item[(i)]
The following relation holds;
\begin{align}
{\cal P}(c_0 M_{i_0}-\mathbbm{1})=0.\label{MNB1}
\end{align}
\item[(ii)]
There exists an N-POVM $\bm{N}:=\{N_i\}_{i\in I}$
with an implementation domain $\ID$ such that
	\begin{align}\label{2eq:PS-2}
		\Tr N_i\rho=\frac{\Tr M_i\rho}{\sum_{j\in I}\Tr M_j\rho}
	\end{align}
	for any $\rho\in\ID$ and any $i\in I$.
\item[(iii)]
The N-POVM $\bm{N}:=\left\{N_i:=\frac{1}{c_0-1}\left(c_0M_i+\frac{1}{|I|}\left(\mathbbm{1}-c_0M_{i_0}\right)\right)\right\}_{i\in I}$
with an implementation domain $\ID$ satisfies \eqref{2eq:PS-2}.
for any $\rho\in\ID$ and any $i\in I$.
\end{description}
\end{theorem}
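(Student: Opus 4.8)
The plan is to prove the cycle \textbf{(i)} $\Rightarrow$ \textbf{(iii)} $\Rightarrow$ \textbf{(ii)} $\Rightarrow$ \textbf{(i)}. The common preliminary step, shared by all three implications, is to use the hypothesis \eqref{MNB4} to linearize the right-hand side of \eqref{2eq:PS-2}. Since the total POVM gives $\sum_{j\in I} M_j=\mathbbm{1}-M_{i_0}$, for any normalized $\rho\in\ID$ I have $\sum_{j\in I}\Tr M_j\rho=\Tr\rho-\Tr\rho M_{i_0}=1-1/c_0=(c_0-1)/c_0$, a nonzero constant independent of $\rho$ (this is the concrete form of Lemma~\ref{lemma:domain} here, and it already forces $c_0>1$). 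Consequently, on $\ID$ the identity \eqref{2eq:PS-2} is equivalent to the purely linear relation $\Tr N_i\rho=\tfrac{c_0}{c_0-1}\Tr M_i\rho$, which is the statement I will actually manipulate throughout.

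For \textbf{(i)} $\Rightarrow$ \textbf{(iii)} I would substitute the explicit operators of (iii) and check the two defining properties. The operator is of the form $\tfrac{c_0}{c_0-1}M_i$ plus a correction proportional to $\mathbbm{1}-c_0M_{i_0}$; the correction is fixed by the normalization requirement $\sum_{i\in I}N_i=\mathbbm{1}$, which is verified by a direct computation using $\sum_{i\in I}M_i=\mathbbm{1}-M_{i_0}$ (here the sign and the factor $1/|I|$ in the correction must be taken exactly so that the sum collapses to $\mathbbm{1}$). To recover \eqref{2eq:PS-2} I then use that \eqref{MNB1} says precisely that $c_0M_{i_0}-\mathbbm{1}$ is orthogonal to $\cK$: since $\rho\in\ID\subset\cK$ and $\cP$ is self-adjoint with $\cP\rho=\rho$, the correction contributes $\Tr[(\mathbbm{1}-c_0M_{i_0})\rho]=\langle\rho,\,\mathbbm{1}-c_0M_{i_0}\rangle=0$, leaving exactly $\Tr N_i\rho=\tfrac{c_0}{c_0-1}\Tr M_i\rho$. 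The step \textbf{(iii)} $\Rightarrow$ \textbf{(ii)} is then immediate, because the operators of (iii) furnish an explicit witness for the existential statement in (ii); the required inclusion $\ID\subset\QD{\bm{N}}$ follows because the right-hand side of \eqref{2eq:PS-2} is a ratio of nonnegative quantities, so $\Tr N_i\rho\ge0$ on $\ID$ automatically.

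For the converse \textbf{(ii)} $\Rightarrow$ \textbf{(i)} I would start from an arbitrary N-POVM $\bm{N}$ satisfying \eqref{2eq:PS-2} and convert the pointwise relation on $\ID$ into an operator identity. The linearized relation gives $\Tr[(N_i-\tfrac{c_0}{c_0-1}M_i)\rho]=0$ for all $\rho\in\ID$, i.e. $N_i-\tfrac{c_0}{c_0-1}M_i\perp\ID$ for each $i$. Summing over $i\in I$ and using $\sum_{i\in I}N_i=\mathbbm{1}$ together with $\sum_{i\in I}M_i=\mathbbm{1}-M_{i_0}$ then yields $\tfrac{1}{c_0-1}(c_0M_{i_0}-\mathbbm{1})\perp\ID$. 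To upgrade this orthogonality to $\ID$ into orthogonality to the whole subspace $\cK$ — which is what \eqref{MNB1} asserts through the projection $\cP$ — I would invoke that the density matrices in $\cK$ span it, that is $\spn{\ID}=\cK$, whence $\cP(c_0M_{i_0}-\mathbbm{1})=0$.

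I expect the main obstacle to be exactly this last bookkeeping about how $\ID$ sits inside $\cK$. Once the denominator has been shown constant, every remaining step is elementary linear algebra; the genuine subtlety is that \eqref{MNB1} is phrased via the projection onto \emph{all} of $\cK$, whereas the measurement identity \eqref{2eq:PS-2} only constrains operators on $\spn{\ID}$. The equivalence \textbf{(ii)} $\Leftrightarrow$ \textbf{(i)} therefore relies on the (natural, but necessary) fact that $\ID=\cK\cap\Psd{\cH}$ spans $\cK$; absent this, one could match $\bm{N}$ to the post-selected POVM on $\ID$ while violating \eqref{MNB1} on the part of $\cK$ orthogonal to $\spn{\ID}$. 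The second point requiring care is the precise normalization and sign of the correction term in (iii), which must be simultaneously orthogonal to $\cK$ (so it is invisible on $\ID$) and exactly compensate $\tfrac{c_0}{c_0-1}\sum_{i\in I}M_i$ so that $\sum_{i\in I}N_i=\mathbbm{1}$.
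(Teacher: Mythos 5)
Your proposal is correct and follows essentially the same route as the paper's own proof: you linearize the denominator to the constant $(c_0-1)/c_0$ using \eqref{MNB4}, verify that the explicit $N_i$ of (iii) is normalized and reproduces the post-selected statistics (giving (i)$\Rightarrow$(iii)$\Rightarrow$(ii)), and obtain (ii)$\Rightarrow$(i) by summing the relations $\Tr[(N_i-\tfrac{c_0}{c_0-1}M_i)\rho]=0$ over $i\in I$, exactly as the paper does. The two subtleties you flag are genuine and are in fact glossed over by the paper itself: its proof of (ii)$\Rightarrow$(i) silently assumes $\spn{\ID}=\cK$ when it declares \eqref{BVC} ``equivalent'' to \eqref{MNB1}, and the correction term in (iii) must carry a minus sign, $N_i=\tfrac{1}{c_0-1}\bigl(c_0M_i-\tfrac{1}{|I|}(\mathbbm{1}-c_0M_{i_0})\bigr)$, which is what the paper's appendix computation actually uses --- the plus sign in the theorem statement does not satisfy $\sum_{i\in I}N_i=\mathbbm{1}$.
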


The proof of Theorem~\ref{2theorem:post-selection} is written in Appendix~\ref{append-4}.
Due to Theorem~\ref{theorem:general} and Theorem~\ref{2theorem:post-selection},
we give a new relationship between N-POVM measurements and post-selected POVM measurements.

The case of Theorem \ref{theorem:general} satisfies 
the condition (i) of Theorem \ref{2theorem:post-selection}
as follows.
In the case of 
Theorem \ref{theorem:general},
by using $c:=c({\{S_i^{(k)}\}})$,
\eqref{def:Im-POVM} implies
\begin{align}
&\frac{c}{c-1}M_{i_0}-\mathbbm{1}
=\frac{c}{c-1}
\left(\mathbbm{1}-\frac{1}{c}\sum_{i\in I}\sum_{k=1}^n S_i^{(k)}\right)-\mathbbm{1} \notag \\
=&\frac{1}{c-1}
\left(\mathbbm{1}-\sum_{i\in I}\sum_{k=1}^n S_i^{(k)}\right).\label{BN1}
\end{align}
Since $\Tr \rho S_i^{(k)}=\Tr \rho f_i^{(k)}(S_i^{(k)})$ for 
$\rho\in \ID$, we have
\begin{align}
\Tr \rho \sum_{i\in I}\sum_{k=1}^n S_i^{(k)}
=
\Tr \rho \sum_{i\in I}\sum_{k=1}^n f_i^{(k)}(S_i^{(k)})
=1=\Tr \rho \mathbbm{1}.\label{BN2}
\end{align}
Thus, the combination of \eqref{BN1} and \eqref{BN2} guarantees 
the relation 
\begin{align}
&\Tr \rho \left(\frac{c}{c-1}M_{i_0}-\mathbbm{1}\right)=0
\end{align}
for any element $\rho\in \ID$,
which implies the condition (i).


\section{Relation to Ambiguous State Discrimination}
\label{section:ASD}
\subsection{General structure}
As seen in Section~\ref{section:PS},
a post-selected POVM measurement can be regarded as an N-POVM measurement without post-selection.
In this section, we see a concrete correspondence between post-selected POVM measurements and N-POVM measurements in the scenario of ambiguous state discrimination.
For our correspondence, the accepting probability for post-selection is constant in an implementation domain, but a special case of ambiguous state discrimination satisfies the condition for the target states.

Let us consider $d$-dimensional Hilbert space ${\cH}$ and
$d$ pure states $\{\ket{\psi_j}\}_{j=1}^d$ such that
$\ket{\psi_1}, \ldots, \ket{\psi_d}$ are linear independent.
We choose its dual basis $\{|\phi_j\rangle\}_{j=1}^d$, which satisfies
\begin{align}
\langle \phi_j|\psi_k\rangle= \delta_{j,k}.
\end{align}
Also, we choose positive numbers $c_j>0$ as
\begin{align}
\sum_{j=1}^d c_j |\phi_j\rangle\langle \phi_j| \le \mathbbm{1}.
\end{align}
Then, we define $M_0:=\mathbbm{1}- \sum_{j=1}^d c_j |\phi_j\rangle\langle \phi_j|$.
The POVM $\{c_j |\phi_j\rangle\langle \phi_j|\}_{j=1}^d\cup\{M_0\}$
discriminates 
the states $\{|\psi_j\rangle\}_{j=1}^d$
perfectly with the ambiguous event $M_0$.
Usually, $c_j$ depends on $j$.
As known in \cite{HHHH2010},
the states in $\{f(g)\ket{\psi}\}_{g\in G}$ are perfectly distinguishable
when we choose an optimal input state $\ket{\psi}$.
In this paper, we deal with the case when we do not necessarily choose optimal input states.

We study whether  
Theorem \ref{2theorem:post-selection}
can be applied to the case when
$\ID$ is given as the convex full of $
|\psi_j\rangle\langle \psi_j|$ and the POVM is the above POVM.
When $c_j$ 
is a constant $c$ for any $j$,
the condition \eqref{MNB1} holds as follows.
For any $j'$, we have
\begin{align}
&\Tr \left(\frac{1}{c}M_0 -\mathbbm{1}\right)|\psi_{j'}\rangle\langle \psi_{j'}|\nonumber\\
=&\Tr \left(\frac{1}{c}\sum_{j=1}^d c |\phi_j\rangle\langle \phi_j| -\mathbbm{1}
\right)|\psi_{j'}\rangle\langle \psi_{j'}|
=\frac{1}{c}c-1=0.
\end{align}
Hence, Theorem \ref{2theorem:post-selection}
guarantees the existence of N-POVM to perfectly distinguish 
the states 
$\{ |\psi_j\rangle\}_{j=1}^d$.

\subsection{Case with group representation}
Next, we seek the case when $c_j$ does not depend on $j$.
Assume that $G$ is a finite group.
Let ${\cal H}_1, \ldots, {\cal H}_k$ be all irreducible representation spaces of $G$.
We have the relation
\begin{align}
\sum_{l=1}^k \dim {\cal H}_l^2=|G|.
\end{align}
We consider the following space
\begin{align}
{\cal H}:=\bigoplus_{l=1}^k {\cal H}_l\otimes {\cal H}_l',
\end{align}
where the Hilbert space ${\cal H}_l'$ has the same dimension 
$d_l$ as ${\cal H}_l$.
In the above, the group $G$ does not act on ${\cal H}_l'$ so that
${\cal H}_l'$ expresses the multiplicity of ${\cal H}_l$.
We denote the presentation of $G$ on ${\cal H}$ by $f$.

We choose a pure state $|\psi\rangle$.
We consider the discrimination of the states
$\{ f(g)|\psi\rangle\}_{g \in G}$.
There uniquely exists a state $|\phi\rangle$ orthogonal to
$\{ f(g)|\psi\rangle\}_{g \in G, g\neq e}$.
Then, the states
$\{ f(g)|\phi\rangle\}_{g \in G}$ form the dual basis of 
$\{ f(g)|\psi\rangle\}_{g \in G}$.
We choose a positive number $c_{|\phi\rangle}>0$ as
\begin{align}
M_0:= I-c_{|\phi\rangle} 
\sum_{g \in G} f(g)|\phi\rangle \langle \phi| f(g)^\dagger\ge0.
\end{align}
Since $M_0$ is invariant, i.e., 
$f(g)M_0f(g)^\dagger=M_0$,
we find that the probability
\begin{align}
 \langle \psi| f(g)^\dagger M_0 f(g)|\psi\rangle
= \langle \psi| M_0 |\psi\rangle
\end{align}
does not depend on $g\in G$.
In other words, the accepting probability is constant for any $f(g)(\ket{\psi})$.
In this way, it is not a strange situation that the accepting probability for post-selection is constant.
Hence, Theorem \ref{2theorem:post-selection}
guarantees the existence of N-POVM to perfectly distinguish 
the states 
$\{ f(g)|\psi\rangle\}_{g \in G}$.

To calculate $c_{|\psi\rangle}$ we focus on the structure of 
the state $|\psi\rangle$.
We choose a maximally entangled state $|\phi_l\rangle$ on
${\cal H}_l\otimes {\cal H}_l'$ and define the distribution $p_l$
as $p_l:= \frac{d_l^2}{|G|}$.
Any state $|\psi\rangle$ is written as
\begin{align}
|\psi\rangle = \sum_{l=1}^k \sqrt{p_l} \mathbbm{1}_l\otimes F_l |\phi_l\rangle,
\end{align}
where $F_l$ is an invertible matrix satisfying the following normalization condition:
\begin{align}
\sum_{l=1}^l p_l \Tr F_l F_l^\dagger =1.
\end{align}
Since the states
$\{ f(g)|\phi\rangle\}_{g \in G}$ form the dual basis of 
$\{ f(g)|\psi\rangle\}_{g \in G}$,
we have
\begin{align}
 \langle \phi| f(g)|\psi\rangle =0 \label{NM1}
 \end{align} 
  for any $g (\neq e) \in G$,
$e$ is the unit element.
The condition \eqref{NM1} holds only when
$|\phi\rangle$ is written as
\begin{align}
|\phi\rangle= t \sum_{l=1}^k \sqrt{p_l} \mathbbm{1}_l\otimes (F_l^\dagger)^{-1} |\phi_l\rangle
\end{align}
with a normalizing constant
because of the following relation:
\begin{align}
&\Big(t \sum_{l=1}^l \sqrt{p_l}
\langle \phi_l|
( \mathbbm{1}_l\otimes (F_l^\dagger)^{-1} )^\dagger\Big)
f(g)|\psi\rangle \notag\\
=&
t \sum_{l=1}^l \sqrt{p_l}
\langle \phi_l|
( \mathbbm{1}_l\otimes (F_l^\dagger)^{-1} )^\dagger
f(g) \sum_{l=1}^l \sqrt{p_l} \mathbbm{1}_l\otimes F_l |\phi_l\rangle \notag\\
=&
t \sum_{l=1}^l \sqrt{p_l}
\langle \phi_l|
( \mathbbm{1}_l\otimes F_l^{-1}) 
\Big(f(g) \sum_{l'=1}^k \sqrt{p_{l'}} \mathbbm{1}_l\otimes F_{l'} |\phi_{l'}\rangle \Big)\notag\\
=&
t \sum_{l=1}^l p_l \langle \phi_l| ( \mathbbm{1}_l\otimes F_l^{-1}) 
f(g)\otimes F_{l} |\phi_{l}\rangle \notag\\
=&
t \sum_{l=1}^l p_l
\langle \phi_l| f(g)\otimes \mathbbm{1}_l|\phi_{l}\rangle =0
\end{align}
for $g (\neq e) \in G$.
The normalizing constant $t$ is given as
\begin{align}
t^{-1}= \sum_{l=1}^k p_l  \Tr  (F_l^\dagger)^{-1} F_l^{-1}.
\end{align}
Then, we have
\begin{align}
\sum_{g \in G} f(g)|\phi\rangle \langle \phi| f(g)^\dagger
=&
\sum_{l=1}^k |G| p_l \frac{1}{d_l^2} \mathbbm{1}_l \otimes (F_l^\dagger)^{-1} F_l^{-1} 
\notag\\
=&
\sum_{l=1}^k  \mathbbm{1}_l \otimes (F_l^\dagger)^{-1} F_l^{-1} .
\end{align}
Therefore, $c_{|\psi\rangle}$ is the inverse of the maximum (over $l=1,\cdots k$) of the maximum eigenvalues of $(F_l^\dagger)^{-1} F_l^{-1}$.

For simplicity, we consider the case when $G$ is a commutative group.
All irreducible representations are one-dimensional. 
Hence, a state $|\psi\rangle$ is written as
$\sum_{l=1}^k\sqrt{\frac{1}{k}} f_l|\phi_l\rangle$,
where $f_l$ is a complex number and $|\phi_l\rangle$ is a basis on the one-dimensional irreducible space ${\cal H}_l$.
Hence,  $c_{|\psi\rangle}$ is $\min_l |f_l|^{-2}$.
 
\if0
In this case, the implementation domain $\ID$ satisfies $f(g)\ketbra{\psi}{\psi}f(g)^\dag\in \ID$ for any $g\in G$,
and the corresponding N-POVM measurement $\{N_g\}_{g\in G}$ is given as
\begin{align}
	N_g=\frac{1}{c_{\ket{\phi}}-1}\left(c_{\ket{\phi}} f(g)\ketbra{\phi}{\phi}f(g)^\dag +\frac{1}{|G|}\left(\mathbbm{1}-c_{\ket{\phi}} M_0\right)\right).
\end{align}

Especially,
when $G$ is a finite cyclic group with a primitive $g_0$,	
the corresponding N-POVM measurement is given as $\{N_i\}_{i=1}^{|G|}$ defined as
\begin{align}
	N_i=\frac{1}{c_{\ket{\phi}}-1}\left(c_{\ket{\phi}} f(g_0)^i\ketbra{\phi}{\phi}f(g_0)^{i\dag} +\frac{1}{|G|}\left(\mathbbm{1}-c_{\ket{\phi}} M_0\right)\right).
\end{align}
\fi

\section{Conclusion}

We have investigated the relationship between N-POVM measurements and post-selected POVM measurements.
We have given a constructive way to implement an N-POVM measurement by post-selected POVM measurement with restricting the domain (Theorem~\ref{theorem:general}).
Besides, in some situation, we have given a sufficient condition for a large dimensional domain and a large accepting probability (Theorem~\ref{theorem:dimension}),
which implies the estimation of the cost of the implementation of N-POVM.
Moreover, as a special case of our implementation, we have given a constructive way to regard post-selection as an N-POVM measurement with restricting the domain (Theorem~\ref{2theorem:post-selection}).

Our results imply a new relationship between N-POVM measurements in GPTs and post-selection in quantum theory.
Due to our results, N-POVM measurements in GPTs can be physically implemented in quantum theory by post-selection and restriction of the domain as a pre-information.
This is a new direction of applying studies of GPTs to quantum information theory.

Finally, we give the following future directions and open problems from our results.
\begin{enumerate}
	\item[(1)] Post-Selected Shannon Theory\\
	Recently, post-selection is well-studied in quantum information theory \cite{Aaronson2005,CF2015,Arvidsson2020,Regula2022,RLW2022}, but its accepting probability is not easy to estimate.
	Our new relationship can estimate the accepting probability of some post-selecting process by combining a bound performance of N-POVM in GPTs.

	\item[(2)] Upper Bound of the Performance in GPTs
	In GPTs, the bound performance for some information tasks is still open.
	Our results imply that the beyond-quantum performance of N-POVM measurement in GPTs is regarded as the advantage of post-selection.
	Therefore, our results can show a new bound performance of N-POVM measurement for some information tasks by applying the bound performance of post-selection in quantum theory.

	\item[(3)] Relation of Complexity\\
	In GPTs, a generalization of quantum computation is studied and its complexity is given by AWPP \cite{Lee2015,Barrett2019}.
	Similar results are known in quantum theory, that is, quantum computation with a low accepting rate of post-selection is also given by AWPP \cite{MN2016}.
	Can we show the mathematical reason of this mysterious relationship by applying our new relationship.

	\item[(4)] Relation of Weak-Value Measurement
	It is an open problem to give an physical interpretation of weak-value.
	A weak-value measurement is defined by a post-selected POVM measurement \cite{AAV1988}, and it is regarded as an N-POVM measurement by our results.
	This will imply a new physical interpretation of weak-values.
\end{enumerate}

\acknowledgments
HA appreciates Bartosz Regula and Yui Kuramochi for discussing applications of our results.

\section*{References}

\newpage

\appendix

\section{Appendix}

\subsection{Proof of Lemma~\ref{lemma:domain}}
\label{append:lem}

\begin{proof}
Given two elements $\rho_1,\rho_2 \in \ID$,
we choose an index $i\in I$ such that $\Tr N_i \rho_1\neq \Tr N_i \rho_2$. 
Then, we have
\begin{align}
	&\cfrac{p\Tr \rho_1M_i}{\sum_{j\in I}\Tr \rho_1M_j}
	+\cfrac{(1-p)\Tr \rho_2M_i}{\sum_{j\in I}\Tr \rho_2M_j}\nonumber\\
	=&p\Tr \rho_1N_i+(1-p)\Tr \rho_2N_i\notag\\
	=&\Tr (p\rho_1+(1-p)\rho_2)N_i\notag\\
	=&\cfrac{\Tr (p\rho_1+(1-p)\rho_2) M_i}{\sum_{j\in I}\Tr (p\rho_1+(1-p)\rho_2) M_j}.\label{eq:lemma:domain-1}
\end{align}
For simplicity, we take values $a,b,x,y$ as
\begin{gather}
	a=\Tr \rho_1M_i, \quad b=\Tr \rho_2M_i,\\
	x=\sum_{j\in I}\Tr \rho_1M_j, \quad y=\sum_{j\in I}\Tr \rho_2 M_j.
\end{gather}
Then, the equation \eqref{eq:lemma:domain-1} is rewritten as
\begin{align}
	\frac{pa}{x}+\frac{(1-p)b}{y}=\frac{pa+(1-p)b}{px+(1-p)y}.
\end{align}
Because $x,y>0$, we reduce the above equation as follows:
\begin{align}
	0=
	&y(px+(1-p)y)pa+x(px+(1-p)y)(1-p)b\\\nonumber
	&-xy(pa+(1-p)b)\\
	=&ypa(px+(1-p)y-x)\\\nonumber
	&+x(1-p)b(px+(1-p)y-y)\\
	=&ypa(1-p)(y-x)+x(1-p)bp(x-y)\\
	=&(x-y)p(1-p)(bx-ay).
\end{align}
Because of the assumption $\Tr N_i \rho_1\neq \Tr N_i \rho_2$, we also obtain
\begin{align}
	0\neq
	&\Tr N_i \rho_1-\Tr N_i \rho_2\\
	=&\frac{\Tr \rho_1 M_i}{\sum_{j\in I}\Tr \rho_1M_j}-\frac{\Tr \rho_2 M_i}{\sum_{j\in I}\Tr \rho_2M_j}\\
	=&\frac{a}{x}-\frac{b}{y}=ay-bx.
\end{align}
Therefore, we conclude $x=y$, i.e.,
$\sum_{j\in I}\Tr \rho_1M_j
=\sum_{j\in I}\Tr \rho_1M_j$.
\end{proof}

\subsection{Proof of Theorem~\ref{theorem:general}}\label{append-1}

\begin{proof}
	\textbf{Outline}
	
	We need to show the following two statements:
	\textbf{(i)} The family $\bm{M}$ defined by \eqref{def:Im-POVM} is a POVM.
	\textbf{(ii)} The equation \eqref{eq:def-IWP} holds for any $\rho\in\cD(\{f_i^{(k)}\}_{i,k})$ and any $i\in I$.
	
	\noindent
	$\blacktriangleright$ \textbf{Proof of (i)}
	
	First, because of the choice of $\bm{M}$ in \eqref{def:Im-POVM},
	the summation condition $\displaystyle\sum_{j\in I\cup\{i_0\}}N_j=\mathbbm{1}$ holds.
	Next, because $S_i^{(k)}$ is positive semi-definite,
	the matrix $\displaystyle \sum_{i\in I}\sum_{k=1}^n S_i^{(k)}$ is also positive semi-definite,
	and therefore, $c>0$ holds.
	Hence, the matrix $M_j$ is positive semi-definite for $j\in I$.
	Because of the constant $c$ is the maximum eigenvalue of $\displaystyle \sum_{i\in I}\sum_{k=1}^n S_i^{(k)}$,
	the following inequality holds:
	\begin{align}
		\frac{1}{c}\sum_{i\in I}\sum_{k=1}^n S_i^{(k)} \le \mathbbm{1},
	\end{align}
	which implies the following inequality:
	\begin{align}
		\mathbbm{1}-\frac{1}{c}\sum_{i\in I}\sum_{k=1}^n S_i^{(k)}\ge0.
	\end{align}
	As a result, the family $\{M_j\}$ defined by \eqref{def:Im-POVM} is a POVM.
	
	\noindent
	$\blacktriangleright$ \textbf{Proof of (ii)}
	
	Because of the choice of the domain $\cD(\{f_i^{(k)}\}_{i,k})$ in \eqref{def:Im-domain},
	the following equation holds for any $\rho\in\cD(\{f_i^{(k)}\}_{i,k})$, any $i\in I$, and any $1\le k\le n$:
	\begin{align}\label{eq:proof-theorem1-1}
		\Tr \rho f_i^{(k)}(S_i^{(k)})=\Tr f_i^{(k)\dag}(\rho)S_i^{(k)}=\Tr \rho S_i^{(k)}.
	\end{align}
	Therefore, the following equation holds for any $\rho\in\cD(\{f_i^{(k)}\}_{i,k})$:
	\begin{align}
		&\Tr \rho \sum_{i\in I}\sum_{k=1}^n S_i^{(k)}
		=\sum_{i\in I}\sum_{k=1}^n \Tr \rho S_i^{(k)}\\
		=&\sum_{i\in I}\sum_{k=1}^n \Tr \rho f_i^{(k)}(S_i^{(k)})
		=\Tr \rho \sum_{i\in I} \sum_{k=1}^n f_i^{(k)}(S_i^{(k)})\\
		=&\Tr \rho \sum_{i\in I} N_i\stackrel{(a)}{=}\Tr \rho \mathbbm{1}=1.\label{eq:proof-theorem1-2}
	\end{align}
	The equation $(a)$ is shown by the summation condition of the beyond-quantum measurement $\bm{N}$.
	Then, we obtain the desirable equation \eqref{eq:def-IWP} for any $\rho\in\cD(\{f_i^{(k)}\}_{i,k})$ and any $i\in I$as follows:
	\begin{align}
		&\cfrac{\Tr \rho M_i}{\sum_{j\in I}\Tr \rho M_j}
		=\frac{1}{c}\frac{\Tr \rho \sum_{k=1}^n S_i^{(k)}}{\sum_{j\in I}\Tr \rho \frac{1}{c}\sum_{k=1}^n S_i^{(k)} }\\
		=&\frac{\Tr \rho \sum_{k=1}^n S_i^{(k)}}{\Tr \rho \sum_{j\in I}\sum_{k=1}^n S_i^{(k)} }
		\stackrel{(a)}{=}\Tr \rho \sum_{k=1}^n S_i^{(k)}\\
		\stackrel{(b)}{=}&\Tr \rho \sum_{k=1}^n f_i^{(k)}\left(S_i^{(k)}\right)=\Tr \rho N_i.
	\end{align}
	The equation $(a)$ is shown from the equation \eqref{eq:proof-theorem1-2}.
	The equation $(b)$ is shown from the equation \eqref{eq:proof-theorem1-1}.
\end{proof}

\subsection{Generalization of Theorem~\ref{theorem:general}}
\label{append-extend}

In this section, we give a generalization of Theorem~\ref{theorem:general} from quantum implementation to more general implementations.
In order to state the generalization,
we introduce a model of GPTs in detail.

Let $\cV$ be a finite-dimensional real vector space,
and we consider a model determined by a state space $\cS\subset\cV$.
As a standard assumption of GPTs \cite[Section 2]{takakura2022}, the state space $\cS$ spans $\cV$.
A measurement $\{M_i\}_{i\in I}$ is defined as a family of $\cV^\ast$ such that $M_i(\rho)\ge0$ for any $\rho\in\cS$ and $\sum_{i\in I} M_i=\mathbbm{1}$,
where $\mathbbm{1}$ is the functional taking 1 for all inputs.
An index $i$ of a measurement corresponds to its outcome.
Here, we denote $\cE_\cS$ and $\cM_{\cS}$ as the set of all functional satisfying $M_i(\rho)\ge0$ for any $\rho\in\cS$.
Also, we denote $\cM_{\cS}$ measurements associated with $\cS$.
The set $\cE_\cS$ also spans $\cV^\ast$ \cite[Section 2]{takakura2022}.

Also, we define a measurement beyond $\cM_\cS$ as a family $\{N_i\}_{i\in I}$ such that $\sum_{i\in I} M_i=\mathbbm{1}$ but $M_i(\rho)\ge0$ does not hold for a state $\rho\in\cS$.
Similarly to the main topic,
even if a family $\{N_i\}_{i\in I}$ is beyond $\cM_\cS$,
it can be implemented by a measurement $\{M_i\}_{i \in I\cup\{i_0\}}$ with post-selection and restriction of the domain.
We say that $\{N_i\}_{i\in I}$ is implemented by $\{M_i\}_{i \in I\cup\{i_0\}}$ with post-selection in an implementation domain $\ID$
when the following equation holds for any $\rho\in\ID$:
\begin{align}\label{eq:ps-ext}
	N_i(\rho)=\frac{M_i(\rho)}{\sum_{j\in I} M_j(\rho)}.
\end{align}

Similarly to Theorem~\ref{theorem:general},
we prove the following theorem.
\begin{theorem}\label{theorem:extend}
	Let $\bm{N}:=\{N_i\}_{i\in I}$ be a measurement beyond $\cM_\cS$ written as
	\begin{align}\label{eq:representation-ext}
		N_i=\sum_{k=1}^n f_i^{(k)}(S_i^{(k)}),
	\end{align}
	where $f_i^{(k)}$ and $S_i^{(k)}$ are a linear map on $\cV^\ast$ and a functional in $\cE_\cS$, respectively.
	Define a set $\cD(\{f_i^{(k)}\}_{i,k})$ as
	\begin{align}\label{def:Im-domain-ext}
	\begin{aligned}
		\cD(\{f_i^{(k)}\}_{i,k}):=&\{\rho\in\cS,\\
		&f_i^{(k)\dag}(\rho)=\rho \ ( i\in I, \ 1\le k\le n)\}.
	\end{aligned}
	\end{align}
	Then, $\bm{N}$ is implemented by the following measurement $\bm{M}(\{f_i^{(k)},S_i^{(k)}\}):=\{M_j\}_{j\in I\cup\{i_0\}}$ with post-selection on the implementation domain $\ID=\cD(\{f_i^{(k)}\}_{i,k})$:
	\begin{align}
		M_j=
		\begin{cases}
			\displaystyle \frac{1}{\overline{c}({\{S_i^{(k)}\}})}\sum_{k=1}^n S_i^{(k)} & (j=i\in I),\\
			\displaystyle \mathbbm{1}-\frac{1}{\overline{c}({\{S_i^{(k)}\}})}\sum_{i\in I}\sum_{k=1}^n S_i^{(k)} & (j=i_0),
		\end{cases}\label{def:Im-ext}
	\end{align}
	where $\overline{c}({\{S_i^{(k)}\}})$ is the maximum value of $\displaystyle \sum_{i\in I}\sum_{k=1}^n S_i^{(k)}(\rho)$ for $\rho\in\cS$.
\end{theorem}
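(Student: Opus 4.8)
The plan is to follow the two-part structure of the proof of Theorem~\ref{theorem:general}, replacing each quantum-specific ingredient by its general-probabilistic counterpart under the dictionary: the trace pairing $\Tr\rho X$ becomes the evaluation $X(\rho)$, positive semi-definiteness becomes membership in the effect set $\cE_\cS$, and the largest eigenvalue of $\sum_{i\in I}\sum_{k=1}^n S_i^{(k)}$ becomes the maximum value $\overline{c}$ of that functional over the compact state space $\cS$. First I would show that $\bm{M}$ defined by \eqref{def:Im-ext} is a genuine measurement in $\cM_\cS$, and then verify the post-selection identity \eqref{eq:ps-ext} on the domain $\cD(\{f_i^{(k)}\}_{i,k})$.

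For measurement validity, the normalization $\sum_{j\in I\cup\{i_0\}}M_j=\mathbbm{1}$ is immediate from the definition. Each $M_i$ with $i\in I$ is an effect since $S_i^{(k)}\in\cE_\cS$ gives $S_i^{(k)}(\rho)\ge0$ for all $\rho\in\cS$, so that $M_i(\rho)=\frac{1}{\overline{c}}\sum_{k=1}^n S_i^{(k)}(\rho)\ge0$ once $\overline{c}>0$ is known. To guarantee $\overline{c}>0$ I would argue by contradiction: if $\overline{c}=0$ then $\sum_{i}\sum_{k}S_i^{(k)}$ vanishes on all of $\cS$, and since each summand is nonnegative on $\cS$ this forces every $S_i^{(k)}$ to annihilate $\cS$; as $\cS$ spans $\cV$, each $S_i^{(k)}$ is the zero functional, whence $N_i=\sum_k f_i^{(k)}(S_i^{(k)})=0$ for every $i$, contradicting $\sum_i N_i=\mathbbm{1}$. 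The step needing the most care is the positivity of the ambiguous effect $M_{i_0}$: here the choice of $\overline{c}$ as the maximum of $\sum_{i}\sum_{k}S_i^{(k)}$ over $\cS$ is exactly what yields $\frac{1}{\overline{c}}\big(\sum_{i}\sum_{k}S_i^{(k)}\big)(\rho)\le1$, hence $M_{i_0}(\rho)=1-\frac{1}{\overline{c}}\big(\sum_{i}\sum_{k}S_i^{(k)}\big)(\rho)\ge0$, for all $\rho\in\cS$. This replaces the operator inequality $\frac{1}{c}\sum S_i^{(k)}\le\mathbbm{1}$ used in the quantum proof, and it is the genuinely structural point: compactness of $\cS$ (a standing GPT assumption) ensures the maximum is attained, and the dual-cone definition of $\cE_\cS$ makes ``nonnegative on $\cS$'' the correct notion of effect.

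For the post-selection identity I would use the defining relation of the domain through the adjoint pairing $(f_i^{(k)}(S_i^{(k)}))(\rho)=S_i^{(k)}(f_i^{(k)\dag}(\rho))$. For $\rho\in\cD(\{f_i^{(k)}\}_{i,k})$ we have $f_i^{(k)\dag}(\rho)=\rho$, so this equals $S_i^{(k)}(\rho)$ and therefore $N_i(\rho)=\sum_{k=1}^n S_i^{(k)}(\rho)$. Summing over $i\in I$ and invoking $\sum_i N_i=\mathbbm{1}$ gives $\sum_{i}\sum_{k}S_i^{(k)}(\rho)=1$, so the accepting denominator $\sum_{j\in I}M_j(\rho)=\frac{1}{\overline{c}}\sum_{j}\sum_{k}S_j^{(k)}(\rho)=\frac{1}{\overline{c}}$ is constant on the domain. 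Dividing then yields $\frac{M_i(\rho)}{\sum_{j\in I}M_j(\rho)}=\sum_{k=1}^n S_i^{(k)}(\rho)=N_i(\rho)$, which is precisely \eqref{eq:ps-ext}. I do not anticipate any serious obstacle beyond correctly identifying ``maximum over $\cS$'' as the dual replacement for ``maximum eigenvalue''; the remainder is the same bookkeeping as in Theorem~\ref{theorem:general}, now carried out in the pairing between $\cV$ and $\cV^\ast$.
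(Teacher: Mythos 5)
Your proposal is correct and follows essentially the same two-part route as the paper's own proof: normalization and positivity of $\bm{M}$ (with positivity of $M_{i_0}$ coming directly from the definition of $\overline{c}$ as a maximum over $\cS$), followed by the adjoint-pairing identity $f_i^{(k)}(S_i^{(k)})(\rho)=S_i^{(k)}(f_i^{(k)\dag}(\rho))=S_i^{(k)}(\rho)$ on the domain and the resulting constancy of the accepting denominator. The only point you treat more explicitly than the paper is the verification that $\overline{c}>0$ (via the spanning assumption on $\cS$ and $\sum_i N_i=\mathbbm{1}$), which the paper leaves implicit; this is a welcome but minor refinement, not a different argument.
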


\begin{proof}[Proof of Theorem~\ref{theorem:extend}]
	We need to show the following two statements:
	\textbf{(i)} The family ${M_i}_{i\in I\cup \{i_0\}}$ defined by \eqref{def:Im-ext} is a measurement in $\cM_\cS$.
	\textbf{(ii)} The equation \eqref{eq:ps-ext} holds for any $\rho\in\cD(\{f_i^{(k)}\}_{i,k})$ and any $i\in I$.

	\noindent
	$\blacktriangleright$ \textbf{Proof of (i)}

	The condition $\sum_{i\in I\cup\{i_0\}}M_i=\mathbbm{1}$ trivially holds because of the choice of $M_i$.
	Also, the condition $M_i(\rho)\ge0$ trivially holds for $\rho\in\cS$ and $i\in I$ because of the choice of $M_i$.
	In the case of $i=i_0$,
	any $\rho\in\cS$ satisfies $M_i(\rho)\ge0$
	because the definition $\overline{c}({\{S_i^{(k)}\}})$ implies the inequality
	\begin{align}
		\frac{1}{\overline{c}({\{S_i^{(k)}\}})}\sum_{i\in I}\sum_{k=1}^n S_i^{(k)}\le 1
	\end{align}
	for any $\rho\in\cS$.
	As a result, the family $\{M_i\}_{i I\cup\{i_0\}}$ is a measurement in $\cM_\cS$.

	\noindent
	$\blacktriangleright$ \textbf{Proof of (ii)}

	Because of the choice of the domain $\cD(\{f_i^{(k)}\}_{i,k})$ in \eqref{def:Im-domain-ext},
	the following equation holds for any $\rho\in\cD(\{f_i^{(k)}\}_{i,k})$, any $i\in I$, and any $1\le k\le n$:
	\begin{align}\label{eq:proof-theorem1-1}
		f_i^{(k)}(S_i^{(k)})(\rho)= (S_i^{(k)})(f_i^{(k)\dag}(\rho))=S_i^{(k)}(\rho).
	\end{align}
	Therefore, similarly to the proof of Theorem~\ref{theorem:general},
	the equation \eqref{eq:ps-ext} holds for any $\rho\in\cD(\{f_i^{(k)}\}_{i,k})$ and any $i\in I$.
\end{proof}

\subsection{Proof of Theorem~\ref{theorem:ppt}}
\label{append-3}
\begin{proof}
	Because $\Gamma$ does not change the trace, we obtain the following equality:
	\begin{align}
		&\Tr \sum_i \left(S_i^{(1)}+S_i^{(2)}\right)
		= \Tr \sum_i \left(S_i^{(1)}+\Gamma(S_i^{(2)})\right)\\
		=& \Tr \sum_i N_i=\Tr I=d.
	\end{align}
	The matrix $S_i^{(k)}$ is positive semi-definite, and therefore, the largest eigenvalue of $ \sum_i \left(S_i^{(1)}+S_i^{(2)}\right)$ is less than its trace.
	As a result, we obtain $\mathrm{Acc}(\bm{M})\ge 1/d$.
\end{proof}

\subsection{Proof of Theorem~\ref{theorem:dimension}}
\label{append-2}

Because $\{\rho_i\}_{i=1}^d$ is the family of orthogonal pure states on $d$-dimensional Hilbert space,
$\{\rho_i\}_{i=1}^d$ determines an orthonormal basis.
As a preliminary,
we represent matrices by considering the basis determined by $\{\rho_i\}_{i=1}^d$ in this subsection.

In order to prove Theorem~\ref{theorem:dimension},
we give the following lemma.

\begin{lemma}\label{lemma:dimension}
	If an N-POVM $\bm{N}=\{N_i\}_{i\in I}$ satisfies the conditions C1 and C2 in Theorem~\ref{theorem:dimension},
	the diagonal entries of $N_i$ is non-negative and strictly positive at least for one entry under the representation of the basis determined by $\{\rho_i\}_{i=1}^d$.
\end{lemma}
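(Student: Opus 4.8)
The plan is to read off the diagonal entries of $N_i$ as expectation values in the distinguished basis, and then split the claim into its non-negativity part, handled by C1, and its strict-positivity part, where C2 does the real work.

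First I would fix the basis: since $\{\rho_j\}_{j=1}^d$ are $d$ orthogonal pure states on the $d$-dimensional space $\cH$, they form an orthonormal basis $\rho_j=\ketbra{e_j}{e_j}$, and in this basis the $j$-th diagonal entry of $N_i$ is exactly $\bra{e_j}N_i\ket{e_j}=\Tr\rho_j N_i$. Condition C1 asserts $\rho_j\in\QD{\bm{N}}$, which by the definition of the quantum domain gives $\Tr\rho_j N_i\ge0$ for every $i\in I$ and every $j$; this immediately yields the non-negativity of all diagonal entries, and notably does so even for the non-positive-semidefinite element, where positivity of the diagonal is not automatic.

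For strict positivity I would prove the stronger dichotomy that $\Tr N_i=0$ forces $N_i=0$, so that every nonzero $N_i$ --- in particular the non-positive-semidefinite element guaranteed by the N-POVM hypothesis --- has $\Tr N_i>0$; since the diagonal entries are non-negative and sum to $\Tr N_i$, a positive trace forces at least one of them to be strictly positive. The centre $\mathbbm{1}/d$ of the ball in C2 is a density matrix, hence lies in $\QD{\bm{N}}$, giving $\tfrac1d\Tr N_i=\Tr\tfrac{\mathbbm{1}}{d}N_i\ge0$, so $\Tr N_i\ge0$. Now suppose $\Tr N_i=0$. Shrinking $\epsilon$ if necessary so that $\epsilon<1/d$, any trace-one matrix $\rho=\tfrac{\mathbbm{1}}{d}+\Delta$ with $\Delta$ traceless Hermitian and $\|\Delta\|_2\le\epsilon$ has eigenvalues in $[\tfrac1d-\epsilon,\tfrac1d+\epsilon]\subset(0,\infty)$, so it is a genuine density matrix lying in $\cB_\epsilon(\mathbbm{1}/d)$ and hence in $\QD{\bm{N}}$ by C2. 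Therefore $0\le\Tr\rho N_i=\tfrac1d\Tr N_i+\Tr\Delta N_i=\Tr\Delta N_i$ for all such $\Delta$; applying this to both $\Delta$ and $-\Delta$ gives $\Tr\Delta N_i=0$ for every traceless Hermitian $\Delta$ in the ball, hence by scaling for all traceless Hermitian $\Delta$. Thus the traceless part of $N_i$ vanishes, and with $\Tr N_i=0$ we conclude $N_i=0$, which is the desired contradiction for any nonzero element.

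The main obstacle I anticipate is the correct reading and deployment of C2: $\cB_\epsilon(\mathbbm{1}/d)$ is a ball of \emph{Hermitian} matrices, so I must restrict to its trace-one slice and verify via the eigenvalue bound above that these perturbations are actually density matrices before invoking membership in $\QD{\bm{N}}$. The genuinely substantive use of C2 is the two-sided $\pm\Delta$ argument, which is precisely what excludes a nonzero traceless non-positive element with vanishing diagonal (such as an off-diagonal Pauli-type matrix) --- a configuration that C1 alone cannot rule out.
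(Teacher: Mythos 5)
Your proof is correct, and it reaches the lemma by a genuinely different route than the paper's own argument, although both share the same first step (C1 gives the $j$-th diagonal entry as $\Tr \rho_j N_i\ge 0$) and the same basic mechanism for the second step (two-sided perturbations of $\mathbbm{1}/d$ inside the C2 ball). The paper argues entrywise by contradiction: if a nonzero $N_i$ had vanishing diagonal, it picks a nonzero off-diagonal entry $n^{(i)}_{lk}$ and perturbs $\mathbbm{1}/d$ by $\pm(\sqrt{\epsilon}/2)(\ketbra{l}{k}+\ketbra{k}{l})$ to force a negative outcome probability. You instead prove the stronger dichotomy that $\Tr N_i=0$ forces $N_i=0$: the $\pm\Delta$ trick over \emph{all} traceless Hermitian $\Delta$ in the ball gives $\Tr \Delta N_i=0$ for every traceless $\Delta$, and taking $\Delta$ proportional to the traceless part of $N_i$ annihilates that part; then any nonzero $N_i$ has $\Tr N_i>0$, which combined with the non-negative diagonal yields a strictly positive entry. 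The paper's targeted perturbation is shorter and more concrete, but your quantification over all traceless directions quietly repairs two slips in the paper's write-up: (a) the direction $\ketbra{l}{k}+\ketbra{k}{l}$ only probes $\Re\, n^{(i)}_{lk}$, so a purely imaginary off-diagonal entry escapes the paper's contradiction (one would also need $i(\ketbra{l}{k}-\ketbra{k}{l})$), whereas your argument covers all directions at once; (b) the paper's $X_\pm$ sits at Hilbert--Schmidt distance $\sqrt{\epsilon/2}$ from $\mathbbm{1}/d$, which exceeds $\epsilon$ whenever $\epsilon<1/2$, so as written $X_\pm$ may fall outside the ball, while you keep $\|\Delta\|_2\le\epsilon$ and, shrinking $\epsilon$ below $1/d$, verify that $\mathbbm{1}/d+\Delta$ is a genuine density matrix --- a check that is needed because membership in $\QD{\bm{N}}$ requires positive semi-definiteness and unit trace. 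The one caveat, shared equally by the paper's proof, is that strict positivity of some diagonal entry can only hold for nonzero $N_i$; both arguments tacitly restrict to that case, which is harmless for the way the lemma is invoked in the proof of Theorem~\ref{theorem:dimension}.
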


\begin{proof}[Proof of Lemma~\ref{lemma:dimension}]
	Because of the condition C1, i.e., $\{\rho_i\}_{i=1}^d\in\QD{\bm{N}}$,
	$\Tr \rho_jN_i\ge0$ holds,
	and therefore,
	the diagonal entries of $N_i$ is non-negative.
	Next, if all diagonal entries of $N_i$ are 0 but $N_i$ is non-zero,
	then, there exists an non-zero off-diagonal entry $n^{(i)}_{lk}$.
	However, because $\cB_{\epsilon}(\mathbbm{1}/d)$ is contained in $\QD{\bm{N}}$,
	the normalization $\overline{X}_{\pm}$ of the matrix $X_\pm:=I/d\pm(\sqrt{\epsilon}/2) (\ketbra{l}{k}+\ketbra{k}{l})$ belongs to $\QD{\bm{N}}$.
	Then, either relation $\Tr X_+N_i<0$ or $\Tr X_-N_i<0$ holds, which contradict to $\overline{X}_\pm\in\QD{\bm{N}}$.
\end{proof}

\begin{proof}[Proof of Theorem~\ref{theorem:dimension}]
	\textbf{Outline}
	
	We choose a desirable family of linear functions $\{f_i\}$ and positive semi-definite matrices $\{S_i\}$ by forcusing an orthogonal basis determined by $\{\rho_i\}_{i=1}^d$ (Part (i)).
	By the choise of $\{f_i\}$ we show the desirable properties \eqref{eq:dim-2} and \eqref{eq:dim-3} (Part (ii)).
	
	\noindent
	$\blacktriangleright$ \textbf{Part (i)}

	Let $E_j \ (j=1,\cdots, d)$ be the diagonal matrix $E_j=\ketbra{j}{j}$.
	Now, we define sets of indices
	\begin{align}
		J_1:&=\{1,\cdots, d\},\\
		J_2:&=\{d+1,\cdots,\dim'(\bm{N})\},\\
		J_3:&=\{\dim'(\bm{N})+1,\cdots,d^2\}.
	\end{align}
	Because $\Her{\cH}$ is $d^2$-dimensional vector space,
	there exists an orthonormal basis $\{B_j\}_{j=1}^d$ such that
	\begin{align}
		B_j&=E_j \ (j\in J_1),\\
		B_j&\in \spn{\{N_i\}_{i\in I'}}\cap\spn{\{E_{j'}\}_{j'\in J_1}}^\perp \ (j\in J_2),\\
		B_j&\in \spn{\{N_i\}_{i\in I'}}^\perp\cap\spn{\{E_{j'}\}_{j'\in J_1}}^\perp \ (j\in J_3).
	\end{align}

	We denote $\hat{N}_i$ and $\tilde{N}_i$ as the diagonal and off-diagonal parts of $N_i$, respectively.
	Lemma~\ref{lemma:dimension} ensures $\hat{N}_i\neq 0$,
	and therefore, there exists a traceless Hermitian matrix $\overline{N}_i$ such that $\hat{N}_i+\overline{N}_i\ge0$.
	Then, we choose a linear function $f_i$ for $i\in I'$ such that
	\begin{align}
		f_i(B_j)&=B_j \ (j\in J_1\cup J_3),\\
		f_i(\overline{N}_i)&=\tilde{N}_i.
	\end{align}
	The function $f_i$ always exists because $\Tr B_j\overline{N}_i=0$ for any $i,j$.
	Also, $f_i$ satisfies
	\begin{align}
		f_i(\hat{N}_i+\overline{N}_i)=\hat{N}_i+\tilde{N}_i=N_i,
	\end{align}
	and $S_i:=\hat{N}_i+\overline{N}_i\ge0$.
	As a result, we have chosen $\{f_i\}$ and $\{S_i\}$ satisfying the condition in Theorem~\ref{theorem:general},
	and we choose a POVM $\bm{M}$ implemented $\bm{N}$ by Theorem~\ref{theorem:general}.

	\noindent
	$\blacktriangleright$ \textbf{Part (ii)}

	First, because $\cB_{\epsilon}(\mathbbm{1}/d)\subset\QD{\bm{N}}$,
	the relation $\dim(\QD{\bm{N}})=d^2$ holds.
	Therefore, the implementation domain $\cD(\{f_i\})$ given in \eqref{def:Im-domain} satisfies the relation \eqref{eq:dim-2} as follows:
	\begin{align}
		&\dim(\cD(\{f_i\}))\nonumber\\
		=&\dim(\spn{\{B_i\}_{i=1}^d})\nonumber\\
		+&\dim\left(\spn{\{\check{N}_i\}_{i\in I}}^{\perp} \cap \spn{\{\overline{N}_i\}_{i\in I}}^{\perp} \right)\\
		\ge &d+d^2-2\dim(\spn{\{N_i\}_{i\in I}}.
	\end{align}
	Second,
	the function $f_i$ satisfies $f_i(B_j)=B_j$ for $j=1,\cdots,d$ and maps from traceless matrices to traceless matrices,
	and therefore, $f_i$ is trace-preserving.
	As a result, Theorem~\ref{theorem:ppt} shows the relation \eqref{eq:dim-3}.
\end{proof}

\subsection{Proof of Theorem~\ref{2theorem:post-selection}}
\label{append-4}
\begin{proof}
(iii)$\Rightarrow$(ii) is trivial.
First, we will show (ii)$\Rightarrow$(i).
\eqref{MNB4} implies that
\begin{align}
\sum_{i \in I}\Tr \rho M_i=1-\frac{1}{c_0}=\frac{c_0-1}{c_0}.
\end{align}
Hence, 
\eqref{2eq:PS-2} implies that
\begin{align}
&\Tr \rho \mathbbm{1}=
\sum_{i \in I}\Tr \rho N_i
=\frac{c_0}{c_0-1}\sum_{i \in I}\Tr \rho M_i \notag\\
=&\frac{c_0}{c_0-1}\Tr \rho (\mathbbm{1}-M_{i_0})\label{BVC}
\end{align}
for $\rho \in \ID$. The relation \eqref{BVC} is equivalent to \eqref{MNB1}.
Hence, we obtain (i).

Next, we will show (i)$\Rightarrow$(iii).
The following equation shows $\sum_{i\in I} N_i =\mathbbm{1}$:
\begin{align}
&\sum_{i\in I}N_i\nonumber\\
=&\sum_{i\in I}
\frac{c_0}{c_0-1}M_i-\frac{1}{(c_0-1)|I|}(\mathbbm{1}-c_0 M_{i_0})
\\
=&\frac{c_0}{c_0-1}(\mathbbm{1}- M_{i_0})-\frac{1}{c_0-1}(\mathbbm{1}-c_0 M_{i_0})\\
=&\mathbbm{1}+\left(-\frac{c_0}{c_0-1}+\frac{c_0}{c_0-1}\right)M_{i_0}\\
=&\mathbbm{1}.
\end{align}

Next, due to the relation \eqref{MNB1},
we obtain the following equation for any $\rho\in\ID$:
\begin{align}
	&\sum_{j\in I}\Tr M_j\rho\nonumber\\
	=&\Tr\left(\mathbbm{1}-M_{i_0}\right) \rho\\
	=&1-\Tr M_{i_0}\rho\\
	=&1-\frac{1}{c_0}=\frac{c_0-1}{c_0}.\label{eq:PS-proof}
\end{align}
As a result, we show the equation \eqref{2eq:PS-2} for $\rho \in \ID$ and $i\in I$ by \eqref{eq:PS-proof} as follows:
\begin{align}
&\Tr \rho N_i\nonumber\\
=&\Tr \rho
\left(\frac{c_0}{c_0-1}M_i-\frac{1}{(c_0-1)|I|}(\mathbbm{1}-c_0 M_{i_0})\right)
\\
=&\frac{c_0}{c_0-1}\Tr \rho M_i\\
=&\frac{\Tr M_i\rho}{\sum_{j\in I} \Tr M_j\rho}.
\end{align}
\end{proof}

\end{document}